\newcommand{\NE}{network effect}
\definecolor{darkred}{RGB}{150,0,0}
\definecolor{darkgreen}{RGB}{0,150,0}
\definecolor{darkblue}{RGB}{0,0,200}
\numberwithin{equation}{section}
\def \endprf{\hfill {\vrule height6pt width6pt depth0pt}\medskip}
\newenvironment{proof}{\noindent {\bf Proof} }{\endprf\par}
\newtheorem{theorem}{\textbf{Theorem}}
\newtheorem{lemma}{\textbf{Lemma}}
\newtheorem{corollary}{\textbf{Corollary}}
\newtheorem{example}{\textbf{Example}}
\newtheorem{definition}{\textbf{Definition}}
\newtheorem{proposition}{\textbf{Proposition}}
\newtheorem{assumption}{\textbf{Assumption}}
\begin{document}

\title{Strategic Dynamic Pricing with Network Effects}
\author{Ali Makhdoumi\thanks{%
Department of Electrical Engineering and Computer Science, Massachusetts
Institute of Technology, Cambridge, MA, 02139 \texttt{makhdoum@mit.edu}}
\and Azarakhsh Malekian\thanks{%
Rotman School of Management, University of Toronto, Toronto, ON, M5S 3E6
\texttt{azarakhsh.malekian@rotman.utoronto.ca}} \and Asuman Ozdaglar\thanks{%
Department of Electrical Engineering and Computer Science, Massachusetts
Institute of Technology, Cambridge, MA, 02139 \texttt{asuman@mit.edu}}}
\date{\today}
\date{}
\maketitle
\begin{abstract}We study the optimal pricing strategy of a monopolist selling homogeneous goods to $n$ buyers over multiple periods. The customers choose their time of purchase to maximize their payoff that depends on their valuation of the product, the purchase price, and the utility they derive from past purchases of others, termed the network effect. We first show that the optimal price sequence is non-decreasing. Therefore, by postponing purchase to future rounds, customers trade-off a higher utility from the network effects with a higher price. 
 We then show that a customer's equilibrium strategy can be characterized by a threshold rule in which at each round a customer purchases the product if and only if her valuation exceeds a certain threshold. This implies that customers face an inference problem regarding the valuations of others, i.e., observing that a customer has not yet purchased the product, signals that her valuation is below a threshold. We consider a block model of network interactions, where there are blocks of buyers (with size equal to constant fraction of total number of buyers) subject to the same network effect. A natural benchmark, this model allows us to provide an explicit characterization of the optimal price sequence asymptotically as the number of agents goes to infinity, which notably is linearly increasing in time with a slope that depends on the network effect through a scalar given by the sum of entries of the inverse of the network weight matrix. 
  Our characterization shows that increasing the ``imbalance'' in the network defined as the difference between the in and out degree of the nodes increases the revenue of the monopolist. We further study the effects of price discrimination and show that in earlier periods monopolist offers lower prices to blocks with higher Bonacich centrality to encourage them to purchase, which in turn further incentivizes other customers to buy in subsequent periods.
\end{abstract}

\maketitle



\section{Introduction}
The benefits that users derive from various products such as digital products (e.g., computer software and smartphone apps) and electronics (e.g., smartphones, hardware devices, and computers) depend, among other things, on the other users who have purchased the product before and thus have contributed to improvements of various aspects of the product. User's purchase decisions in settings with such externalities, termed network effects, will focus on not just whether but when to make a purchase. This implies that a seller will choose a dynamic price path for the product that aims to build the central early user population to increase the network effects and the purchase possibility of the product in later periods. Despite the ubiquity of these issues, there is little work on dynamic pricing with network effects.  

In this paper, we study the problem of dynamic pricing in the presence of network effects. We consider a dynamic game between a monopolist seller and a set of buyers. All buyers and seller are forward-looking. The seller announces and \emph{commits} to a price sequence and buyers decide whether and when to buy a single item. The utility of each buyer depends on her valuation of the product, the price sequence, and the (weighted) number of other customers who have already bought the product. 
We first show that in this setting the optimal price sequence is non-decreasing. We then show that the equilibrium purchase decision of buyers (in Perfect Bayesian Equilibrium) can be characterized by a threshold rule in which buyers purchase at different rounds if and only if their valuations exceed a certain threshold. This characterization implies that customers face a learning problem regarding the valuations of others since if a round has reached and a customer has not yet purchased the item, her valuation must be below the threshold. Therefore, at any round the belief of a customer regarding the valuations of the remaining customers gets updated using Bayes' rule. As the optimal price sequence is non-decreasing, by postponing purchase to future rounds each buyer faces the following trade-off: on one hand, she has to pay a higher price, and on the other hand, her utility from the network effects becomes larger.

Building on this characterization, we find the optimal pricing strategy in a block model setting. In particular, we consider a block model with $m$ blocks such that each block $h$ has a constant fraction of the total number of buyers. The networks effects are captured by a matrix $E \in \mathbb{R}^{m \times m}$, where $E_{hh'}$ denotes the network effect of a buyer in block $h'$ on a buyer in block $h$. 
This model provides a natural benchmark in which there are blocks of users subject to the same network effect while still allowing diverse interactions among these blocks. Different blocks may for example represent communities with different characteristics (see \citet[Chapter 7]{tirole1988theory} and \citet[Chapter 8]{talluri2006theory}). Most importantly, this model allows us to write the seller's expected revenue as a multivariate Bernstein polynomial which enables us to use asymptotic convergence theory of these polynomials and explicitly characterize the optimal price sequence (see \cite{lorentz2012bernstein}).

Interestingly, for any distribution of buyers' valuations (under some regularity conditions) we find the closed-form solution of the optimal price sequence asymptotically as the number of users goes to infinity. The optimal price sequence is linearly increasing, and our characterization shows that the properties of both optimal price sequence and the optimal revenue depend on the quantity $1/\left(\mathbf{1}^{T} E^{-1} \mathbf{1}\right)$ which we term the \emph{\NE}. In particular, the extent of the price difference at two consecutive rounds (slope of the price path) is greater for higher \NE~and the optimal revenue is increasing and convex in \NE. The \NE~(and hence the optimal revenue) is higher for ``imbalance'' networks.\footnote{A directed network is called balance if for each node the out-degree and in-degree are equal.} More precisely, for a given sum of network effects (i.e., sum of the entries of $E$), by decreasing sum of the multiplication of out-degree and in-degree of blocks the revenue increases. 

Moreover, we consider a setting with price discrimination, where monopolist offers different prices to different blocks, and characterize the optimal price sequence. We establish that the optimal price sequence is linearly increasing with a slope which is in the form of a ``weighted Bonacich centrality''. Our results indicate that in earlier periods monopolist offers lower prices to blocks with higher centrality to encourage them to purchase, which in turn further incentivizes other customers to buy at higher prices in the subsequent periods. We also consider a variation of our model with utility from all purchases and characterize the optimal price sequence and revenue. 


\subsection{Related Literature}
Our paper relates to two sets of works: (i) the study of markets with network effects and (ii) the study of markets with forward-looking strategic buyers.
\subsubsection{Network Externalities}
Markets for products with network effects has been first studied in \cite{rohlfs1974theory}, \cite{katz1985network}, and \cite{farrell1985standardization}.
Given the importance of network effects in markets, empirical investigations have examined its implications in a variety of industries. In particular, \cite{au2001should} examine the adoption of electronic bill and payment technology and show the existence of network effects and its implications, \cite{gallaugher2002understanding} empirically study the market for Web server software and establish the presence of network effects, and \cite{brynjolfsson1996network} empirically study the network effects in software product market and confirm that the network effects significantly has increased the price of products.

Moreover, on the theory side, a line of research has examined the strategic and welfare implications of network effects. In particular, \cite{candogan2012optimal}, \cite{cohen2013designing}, and \cite{bloch2013pricing} study the optimal static price sequence of a seller selling a divisible good (service) to consumers with network effects. They consider a two-stage game in which a seller decides on the prices, and then buyers decide their consumption in an equilibrium. Given a set of prices, their model takes the form of a network game among agents that interact locally, which relates to a series of papers such as \cite{ballester2006s, bramoulle2007public, galeotti2009influencing}, and \cite{bramoulle2014strategic}. Related models are more recently investigated in \cite{fainmesser2016pricing}, \cite{Manshadi17}, and \cite{belloni2017mechanism}. In particular, \cite{Manshadi17} consider promotion planning of network products and study the effect of network structure.


\subsubsection{Strategic Buyers}
A number of papers in the literature consider strategic forward-looking buyers who make inter-temporal purchasing decisions with the goal of maximizing their utility. Many empirical works confirm that assuming myopic customer behavior is no longer a tenable assumption (see e.g. \cite{li2014consumers}). The importance of forward-looking customer behavior in shaping firms' pricing decision has been broadly identified by practitioners and a recent literature has theoretically studied its implications (see \cite{liu2008strategic, horner2011managing}, \cite{board2016revenue, pai2013optimal, borgs2014optimal, cachon2015price, chen2015robust, yang2015dynamic, lobel2015optimizing, dilme2016revenue, bernstein2016dynamic}). In particular, \cite{besbes2015intertemporal} study the optimal price sequence of a committed seller that faces customers arriving over time with heterogeneous willingness to wait before making a purchase. They show that cyclic pricing policies are optimal for this setting. \cite{ajorlou2016dynamic} consider a setting in which customers know about the existence of a product through the word-of-mouth communication and study the pricing strategy of the seller. \cite{papanastasiou2016dynamic} consider a setting in which forward-looking customers learn the quality (unknown and fixed) of a product from the reviews of their peers and study the pricing strategy of the seller. \cite{lingenbrink2018signaling} consider a setting with strategic customers and uncertain inventory and find the revenue-optimal signaling mechanism (i.e., the signaling that ``persuades'' customers to purchase at higher price).\footnote{Our paper also relates to the literature on the 
``Coase conjecture'' \cite{coase1972durability, gul1986foundations}.  
 ``Coasian dynamics'' (\cite{hart1988contract}) consist of two properties: (i) higher valuation buyers make their purchase no later than lower valuation buyers (skimming property) and (ii) equilibrium price sequence is non-increasing over time (price monotonicity property). In this paper, we show that the second property does not hold when network effects are present.}
\subsection{Notation}
For any matrix $M \in \mathbb{R}^{m \times n}$, we use both $[M]_{ij}$ and $M_{ij}$ to denote the entry at $i$th row and $j$th column. We show vectors with bold face letters. For a matrix $M$, $\rho(M)$ and $||M||_{\infty}$ denote the spectral radius and infinity norm of $M$, defined as $\rho(M) =\max\{|\lambda|~:~ \lambda \text{ is eigen value of } M\}$, and $||M||_{\infty} = \max_{1 \le i\le m} \sum_{j=1}^n |M_{ij}|$, respectively. The vector of all ones is shown by $\mathbf{1}$, where the dimension of vector is clear from the context. For any event $\mathcal{E}$, $\mathbf{1}\{\mathcal{E}\} =1$ if $\mathcal{E}$ holds and $0$, otherwise. For any integer $n$, we let $[n]=\{1, \dots, n\}$. We denote the transpose of vector $\mathbf{x}$ and matrix $M$ by $\mathbf{x}^T$ and $M^T$, respectively. For two vectors $\mathbf{x}, \mathbf{y} \in \mathbb{R}^m$, $\mathbf{x} \ge \mathbf{y}$ means entry-wise inequality, i.e. $x_i \ge y_i$, for all $i \in [m]$. For two matrices $A, B \in \mathbb{R}^{m \times m}$, $A \ge B$ means entry-wise inequality, i.e. $A_{ij} \ge B_{ij}$, for all $i,j \in [m]$. We show a weighted directed network by $(V, G)$ where $V=\{1, \dots, n\}$ represents the set of nodes and $G_{ij}$ represents the weight of the edge from $j$ to $i$. In-degree and out-degree of node $i$ are denoted by $d_i^{(\text{in})}= \sum_{j=1}^n G_{ij}$ and $d_i^{(\text{out})}= \sum_{j=1}^m G_{ji}$. A network is called \emph{symmetric} iff $G_{ij}=G_{ji}$ for all $i, j \in [n]$ and is called \emph{balance} iff $d_i^{\text{(out)}}=d_i^{\text{(in)}}$ for all $i \in [n]$. We let $\mathcal{P}_{[0,1]}(\cdot)$ denote the projection operator onto the interval $[0,1]$.
\subsection{Outline}
The rest of the paper is organized as follows. In Sections \ref{sec:Model}, \ref{sec:Prelim}, and \ref{sec:IslandModel} we describe the model and provide preliminary characterizations of buyers and seller strategy. In Section \ref{sec:OptimalPricing} we characterize the optimal price sequence. In Section \ref{sec:networkeffects}, we study the effects of network structure on the optimal price sequence and optimal revenue. Finally, in Section \ref{sec:PriceDiscrimination} we characterize optimal price sequence with price discrimination, leading to concluding remarks in Section \ref{sec:Conclusion}. All of the omitted proofs as well as a variation of our model and analysis are presented in the Appendix. 
\section{Model Description}\label{sec:Model}
We consider a dynamic game between a monopolist with infinitely many homogeneous items and $n$ buyers in $T$ \emph{rounds}. For our analysis, we find it more convenient to index the rounds in decreasing order, i.e., round $t$ refers to \emph{period} $T+1-t$ (there are $t-1$ remaining periods until the end of selling horizon). The monopolist announces a price sequence $\mathbf{p}=(p_{T}, p_{T-1}, \dots, p_1)$ where $p_t$ is the price offered for the item at round $t$. At each round, buyers decide whether to buy the item or postpone it to future rounds. We let $h^t$ denote the set of buyers that buy the item at round $t$ at price $p_t$. The history available to buyers at round $t$ is given by $H^t=\{h^{T+1}, h^{T}, \dots, h^{t+1}\}$ (we also let $h^{T+1} = \emptyset$). Each buyer has a valuation $v$ in $[0,1]$  drawn independently from a known continuous distribution $F_v(\cdot)$. The utility of each buyer depends on her valuation, the price sequence, and a network effect term. More specifically, we assume that buyer interactions are captured by a directed weighted graph $(V, G)$, where $V=\{1, \dots, n\}$ is the set of buyers and $G=[g_{ij}]_{i,j \in V}$ is a weight matrix with $g_{ij} \ge 0$ representing the utility gain $i$ derives from $j$'s purchase (note that $g_{ij}$ can be different from $g_{ji}$). Thus, the utility of buyer $i$ with valuation $v^{(i)}$, if she buys the item at round $t$ with price $p_t$ is given by
\begin{align}\label{eq:UtilityofBuyers}
u_i\left(\mathbf{v}, \mathbf{p}, H^t \right)= v^{(i)} - p_t +  \sum_{j \neq i} g_{ij} \mathbf{1}\{ j \text{ has bought at round } s>t \},
\end{align}
where $\mathbf{v}=(v^{(1)}, \dots, v^{(n)})$. The third term is the weighted sum of other buyers who have bought the item before round $t$ and represents the ``network effect'' on buyer $i$'s utility. This utility model captures situations in which other buyers purchase decisions affect a buyer's utility by improving the product through their use. Hence, at the time a buyer makes her purchase decision and utilizes the product, what matters is past users that determine how improved the product is. Note that consistent with this interpretation and full rationality (in perfect Bayesian equilibrium), even though buyers derive utility from purchases in the past, they are \textit{forward-looking}, i.e., they take into account all future behavior and decide to postpone the purchase if it is optimal. In Appendix \ref{sec:future}, we consider a variation of our model with utility from all purchases (i.e., not only the previous purchases). For this alternative model, we find the optimal price sequence and the optimal revenue of the monopolist. 


We next describe the buyers' strategies. A given price sequence $\mathbf{p}=(p_T, \dots, p_1)$ induces a dynamic incomplete information game among buyers. A (pure) strategy for buyer $i$ is a sequence $\{b_i^t\}_{t=1}^T$, where $b_i^t$ is a mapping from $\mathbb{R} \times H^t \times \mathbb{R}^T$ into $\{0, 1\}$, mapping buyer $i$'s valuation, the history of the game, and the price sequence into a purchase decision. A perfect Bayesian equilibrium is a collection of strategies $\{b_i^t\}_{t=1}^T$, for $i \in [n]$ such that buyer $i$ maximizes her expected utility given her belief (updated in a Bayesian manner) and strategies of other buyers. In particular, in a perfect Bayesian equilibrium a buyer $i \not\in H^t$ buys the item at round $t$ with price $p_t$ if and only if 
\begin{align*}
u_i\left(\mathbf{v}, \mathbf{p}, H^t \right) \ge \max_{s<t}\mathbb{E}\left[ u_i\left(\mathbf{v}, \mathbf{p}, H^s \right) \mid H^t\right], 
\end{align*}
where the left hand side is the utility of buyer $i$ if she purchases at round $t$ and the right hand side is the maximum of expected utility from purchasing in any of the future rounds (i.e., $s<t$). The expectation is taken over the belief of buyer $i$ regarding the other buyers' valuations. 

For a given price sequence $\mathbf{p}$, the expected revenue of the monopolist is\footnote{We use the terms monopolist and seller interchangeably. We also use the terms user, customer, and buyer interchangeably.}
\begin{align}\label{eq:Sellerutility}
\sum_{t=1}^T \sum_{i=1}^n p_t \mathbb{E} \left[ \mathbf{1}\{ i \text{ buys at round } t \text{ with price } p_t\} \right],
\end{align}
where we normalized the marginal cost of the monopolist to zero. We refer to the price sequence that maximizes Eq. \eqref{eq:Sellerutility} as the \textit{optimal price sequence} and the corresponding revenue as the \textit{optimal revenue} (we also use the term \textit{optimal normalized revenue} which is equal to the optimal revenue divided by the number of buyers). The ability of the seller to commit to a price sequence is important for our results. Without such commitment, a Coase conjecture-type reasoning would create a downward pressure on prices and would tend to reduce seller's revenue (\cite{coase1972durability}). Though such commitment is not possible in some settings, many sellers can build a reputation for such commitment, for instance, by creating explicit early discounts which will be lifted later on (see \citet[Chapter 8]{talluri2006theory} for a discussion of committed pricing).

\section{Preliminary Characterizations}\label{sec:Prelim}
In this section, we show the optimal price sequence is non-decreasing and characterize the purchase decision of customers in equilibrium. Each buyer faces an optimal stopping problem, choosing the round (if any) along a sequence of prices at which to accept the offered price and exit the game given the strategy of other buyers. We next show that the optimal price sequence is non-decreasing and buyer $i$ chooses to purchase the item only if her valuation exceeds a time and history dependent threshold denoted by $v^{(i)}_t (H^t)$ that satisfies $v^{(i)}_{t}(H^{t}) \ge v^{(i)}_{t-1} (H^{t-1})$ for all $H^{t-1}$. In the rest of the paper, we use $v^{(i)}_t (H^t)$ and $v^{(i)}_t$ interchangeably and refer to them as \emph{critical thresholds}.
\begin{proposition}\label{Pro:SingleCrrossing}
\begin{itemize}
\item [(a)] The seller's optimal price sequence is non-decreasing, i.e., any optimal price sequence has a corresponding non-decreasing price sequence in which the equilibrium path (the purchase decision of buyers) remains the same. 
\item [(b)] Given a non-decreasing price sequence, the purchase decision of buyer $i \in [n]$ in any equilibrium is a thresholding decision, i.e., for each $i$ there exists a sequence $\{v^{(i)}_t\}_{t=T}^{1}$, such that $v_t^{(i)} \ge v_{t-1}^{(i)}$ for $2 \le t\le T $ and buyer $i$ purchases at round $t$ if and only if $v^{(i)} \ge v^{(i)}_t$. 
\end{itemize}
\end{proposition}
Proposition \ref{Pro:SingleCrrossing} is a crucial observation on which much of the rest of our analysis builds. Technically, it is simple and relates to previous results in dynamic settings with preferences satisfying single crossing. Its implications in our model are far-reaching, however. Without network effects, increasing (non-decreasing) price sequence would be impossible to sustain. Because with increasing prices, high-valuation buyers will tend to be the first ones to purchase, and if lower-valuation buyers prefer not to purchase early on (with low prices), they would also prefer not to purchase later on with higher prices (as there is no benefit from network effects). This phenomenon is transformed in the presence of network effects. Now because the increase in the number of users over time raises the network effect term (regardless of the exact form of network interactions), lower-valuation buyers might prefer to buy later and at higher prices. In fact, it is not optimal for the seller to have a strictly decreasing price sequence, because this would induce all buyers to delay, while an increasing price sequence would induce high-valuation buyers to purchase early, while lower-valuation ones wait and purchase once the network effect is higher.

In the next lemma we provide a relation for the critical thresholds describing the buyers' purchase decision. 
\begin{lemma}\label{Lem:buyer'sbehavior}
Given a non-decreasing price sequence $\mathbf{p}$, for $t=T, \dots, 1$ the critical thresholds satisfy the following indifference condition
\begin{align}\label{eq:indifferenceequation}
 \sum_{j \in [n] \setminus \left(H^t \cup \{i\} \right)} g_{ij} \mathbb{P}\left[v^{(j)} \ge v^{(j)}_{t} \mid v^{(j)} \le v^{(j)}_{t+1}\right]& =  p_{t-1}- p_t , \qquad i \not\in H^t,
\end{align}
with $v_{T+1}^{(i)}=1$ for all $i \in [n]$. Moreover, if $v^{(j)}_{t}$'s are in $[0,1]$, then we have 
\begin{align*}
\mathbb{P}\left[v^{(j)} \ge v^{(j)}_{t} \mid v^{(j)} \le v^{(j)}_{t+1}\right]= 1- \frac{F_v \left(  v^{(j)}_t  \right)}{F_v \left(  v^{(j)}_{t+1}  \right)} .
\end{align*}
\end{lemma}
This relation is defined by noting that a buyer $i$ with valuation equal to $v_t^{(i)}$ must be indifferent between buying at round $t$ and waiting until round $t-1$. 

The sequence $\{v^{(i)}_t\}_{t=T+1}^{1}$, $i \in [n]$ depends on the history of the game and Lemma \ref{Lem:buyer'sbehavior} provides an indifference condition for it, with boundary conditions $v^{(i)}_{T+1} = 1$, $i \in [n]$. This is because in the first period, the seller and buyers have not yet learned anything about buyer $i$'s valuation. Note that each buyer faces an inference problem regarding the valuation of the other buyers. In particular, if round $t$ with history $H^t$ is reached and a buyer $j$ has not purchased the item, then all buyers know that $v^{(j)}$ belongs to $[0, v^{(j)}_t)$ with cumulative density function 
\begin{align*}
\mathbb{P}\left[v^{(j)}\le x \mid H^t \right] = \frac{F_v(x)}{F_v(v^{(j)}_t)}, \qquad \forall x \in \left[ 0, v^{(j)}_t \right),
\end{align*}
which is obtained via Bayes' rule.

In the next example, we illustrate two main challenges in analyzing the equilibrium of this game. First, we illustrate that finding the optimal price sequence is a complicated task because the monopolist needs to take into account all possible histories (i.e., $(T+1)^n$ histories for a game with $n$ buyers in $T$ rounds). Second, we illustrate the possibility of having multiple equilibria. 
\begin{figure}
 \begin{center}
    \includegraphics[width=0.24\textwidth]{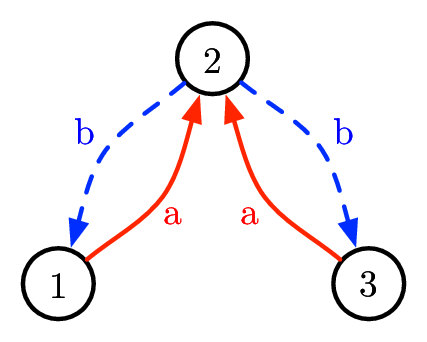}
      \end{center}
\caption{Illustration of the graph capturing the network effects used in Example \ref{Example:multiple} for $a=\frac{4}{5}$ and $b=\frac{3}{5}$.}
    \label{fig:Example}

\end{figure}
\begin{example}\label{Example:multiple}
\textup{We consider a game with $3$ buyers in $2$ periods ($n=3$ and $T=2$) where the network effects are represented by 
\begin{align*}
G= \begin{small}\begin{pmatrix} 
 0 &&& a &&& 0 \\
b &&& 0 &&& b \\
0 &&& a &&& 0 
\end{pmatrix} \end{small},
\end{align*}
 for $a=4/5$ and $b=3/5$ (see Figure \ref{fig:Example}). We also let the valuations to be uniform, i.e., $F_v(x)=x$, for all $x \in [0,1]$. Given a non-decreasing price sequence $(p_2, p_1)$, if the critical thresholds are in $(0,1)$, using Lemma \ref{Lem:buyer'sbehavior} for the first period (i.e., $t=2$) leads to 
\begin{align*}
v_2^{(1)}-p_2 & = v_2^{(1)}-p_1+ a (1- v_2^{(2)}), \quad v_2^{(3)}-p_2  = v_2^{(3)}-p_1+ a (1- v_2^{(2)}), \\
v_2^{(2)}-p_2 & = v_2^{(2)}-p_1+ b \left( \left(1- v_2^{(1)}\right) + \left(1- v_2^{(3)}\right) \right),
\end{align*}
where we used $v_3^{(i)}=1$, $i=1,2,3$. This set of equations have multiple solutions, leading to multiple strategies for the first period of the game. In particular, the critical thresholds satisfy
\begin{align}\label{eq:Example:FirstRoundThresholds}
v_2^{(2)} = 1- \frac{p_1-p_2}{a}, \quad v_2^{(2)} + v_2^{(3)} = 2- \frac{p_1-p_2}{a}.
\end{align}
Using Lemma \ref{Lem:buyer'sbehavior} for the second period (i.e., $t=1$), we have 
\begin{align}\label{eq:Example:SecondRoundThresholds}
v_1^{(1)} & =  p_1- a \mathbf{1}\{2 \text{ bought at price } p_2\} , \quad v_1^{(3)}  =  p_1- a \mathbf{1}\{2 \text{ bought at price } p_2\}, \nonumber \\
v_1^{(2)} & =  p_1- b \left( \mathbf{1}\{1 \text{ bought at price } p_2\} + \mathbf{1}\{3 \text{ bought at price } p_2\} \right). 
\end{align}
Given any set of critical thresholds, the expected revenue becomes
\begin{align}\label{eq:Example:Revenue}
 \sum_{S \subseteq \{1, 2, 3\}}\left( \prod_{i \in S} \mathbb{P}\left[ v^{(i)} \ge v_2^{(i)}\right] \right)  \left( p_2 |S| + p_1\left( \sum_{i \in \{1, 2, 3\} \setminus S }\mathbb{P}\left[ v^{(i)} \ge v_1^{(i)} \mid H^2=S\right] \right) \right),
\end{align}
where the expectation is taken over all histories for $H^2$ ($8$ possible histories). The first term $p_2 |S|$ is the revenue in the first period and the second term is the expected revenue in the second period, given $H^2$.
%
 Maximizing the expected revenue subject to Eqs. \eqref{eq:Example:FirstRoundThresholds} and \eqref{eq:Example:SecondRoundThresholds}, leads to the following pairs (among many others) of price sequence and buyers' strategies:
\begin{itemize}
\item Symmetric equilibrium: in this equilibrium the buyers' decision depend on their valuation, price, and the network effects, and not on their identity. Therefore, the strategies of buyers $1$ and $3$ are the same, i.e., $v_1^{(1)}=v_1^{(3)}$ and $v_2^{(1)}=v_2^{(3)}$ which leads to price sequence $p_1=.6$ and $p_2=.48$ with expected revenue $.38$. The corresponding buyers' strategies in the first period are determined by critical thresholds $v_2^{(2)}= 1- (p_1-p_2)/a$, $v_2^{(1)}=v_2^{(3)}= 1- (p_1-p_2)/2b$, and in the second period by critical thresholds given in Eq. \eqref{eq:Example:SecondRoundThresholds}. Note that all these critical thresholds are in $(0,1)$. 
\item Asymmetric equilibrium: suppose the buyers' strategies in the first period are determined by critical thresholds $v_2^{(2)}= 1- (p_1-p_2)/a$, $v_2^{(1)}= 1- (p_1-p_2)/3b$, $v_2^{(3)}= 1- 2(p_1-p_2)/3b$. The optimal price sequence becomes $p_1=.6$ and $p_2=.42$ with expected revenue $.52$. The buyers' strategies in the second period are determined by critical thresholds given in Eq. \eqref{eq:Example:SecondRoundThresholds}. Note that again all these critical thresholds are in $(0,1)$. 
\end{itemize}
}
\end{example}
To overcome the challenges illustrated in Example \ref{Example:multiple} and to obtain a tractable expected revenue, we will consider a block model as described in the next section and focus on the symmetric equilibrium concept. Intuitively, with this setting the sample equilibrium path is close to its expectation which enables us to use techniques from probability theory (namely, Bernstein polynomial convergence; see \cite{lorentz2012bernstein}) to find a closed-form characterization for the expected revenue as well as the optimal price sequence. 
\section{The Block Model and Buyers' Equilibrium}\label{sec:IslandModel}
We consider a ``block model'' (first introduced in \cite{white1976social, holland1983stochastic}) in which the buyers are partitioned into $m$ blocks. Block $k \in [m]$ denoted by $G_k$ has $n_k=\alpha_k n$ many buyers for some $\alpha_k \in (0,1]$. We let $A \in \mathbb{R}^{m \times m}$ be a diagonal matrix with $A_{ii}=\alpha_i$. The utility gains of buyers in block $h$ from purchase decision of buyers in block $k$ are equal and denoted by $E_{hk}$. Hence we can capture all the gains by a \emph{network matrix} $E \in \mathbb{R}^{m \times m}$. Formally, we let 
\begin{align*}
g_{ij}& = \frac{E_{kh}}{n}, \quad i\in G_k, j \in G_{h}, k, h \in [m], k\neq h, \\
g_{ij}&= \frac{E_{kk}}{n}, \quad i,j \in G_k, k \in [m],
\end{align*}
where the normalization by $n$ is to guarantee that the network effects term in buyers' utilities does not grow with $n$ and is comparable to valuations which are in $[0,1]$. For instance, if $l_h \in [0, \alpha_h n]$ many buyers from block $h$ purchase the product, then the network effects of a buyer in $G_k$ is $\sum_{h \in [m]} E_{kh} \frac{l_h}{n}$. This model provides a natural benchmark in which there are blocks of users subject to the same network effect while still allowing diverse interactions among these blocks. These blocks may for example represent communities with dense linkages within themselves. Most importantly, this model allows us to write the monopolist's expected revenue as a multivariate Bernstein polynomial which enables us to use convergence results of these polynomials and explicitly characterize the optimal price sequence.

The following provides the definition and convergence of multivariate Bernstein polynomial (see \citet[Chapter 2.9]{lorentz2012bernstein}).
\begin{definition}[Multivariate Bernstein Polynomial]
\textup{
For any function $f: \Delta_k \to \mathbb{R}$, where $\Delta_k= \{\mathbf{x} \in \mathbb{R}^k: ~x_i \ge 0, i \in [k], ~\sum_{i=1}^k x_i\le 1\}$ is $k$-dimensional simplex, multivariate Bernstein polynomial is defined as
\begin{align*}
B_{f,n}(x_1, \dots, x_k) = \sum_{\substack{r_i \ge 0, i \in [k]\\ \sum_{i=1}^k r_i \le n}} f\left(\frac{r_1}{n}, \dots, \frac{r_k}{n} \right) \binom{n}{r_1, \dots, r_k} x_1^{r_1} \cdots x_k^{r_k}  \left(1- x_1 - \dots - x_k \right)^{n-r_1- \dots -r_k},
\end{align*}
where 
\begin{align*}
 \binom{n}{r_1, \dots, r_k} = \frac{n!}{r_1! \dots r_k! (n- r_1 - \dots - r_k)!}. 
\end{align*}
}
\end{definition}
\begin{theorem}[\cite{lorentz2012bernstein}] \label{Thm:BersteinSimplex}
If $f:\Delta_k \to \mathbb{R}$ is continuous, then we have $B_{f,n} \to f$ uniformly. 
\end{theorem}
As shown in Example \ref{Example:multiple} there exist multiple equilibria for buyers' purchase decisions; however, there exists a unique symmetric equilibrium in which a buyer's strategy depends on her valuation and her network effects, not on her identity. In the rest of the paper, we use the symmetric equilibrium concept and refer to it as \emph{buyers' equilibrium}.\footnote{Symmetric equilibrium is used as a selection device among multiple equilibria which is widely used in the literature. See \cite{gul1986foundations}, \cite{chen2012name}, \cite{horner2011managing} for dynamic pricing settings, \citet[Chapter 4]{krishna2009auction} for auction setting, and \cite[Chapter 8]{talluri2006theory} for pricing games.} Using Lemma \ref{Lem:buyer'sbehavior}, the buyers' equilibrium is characterized by critical thresholds defined for blocks described in the next corollary. 
\begin{corollary}[Buyers' Equilibrium]\label{Cor:SymmetricEquil}
\textup{ Given a non-decreasing price sequence $\mathbf{p}$, for $t=T, \dots, 1$ the critical thresholds satisfy the following indifference condition
\begin{align}\label{eq:Def:SymmetricEquil}
 \sum_{k \in [m]} \left.| G_k \setminus H^t \right.| ~ \frac{E_{hk}}{n}  ~ \mathbb{P}\left[v \ge v^{(k)}_{t} \mid v \le v^{(k)}_{t+1}\right]=  p_{t-1}- p_t , \qquad h \in [m],
\end{align}
with $v_{T+1}^{(h)}=1$ for all $h \in [m]$. Moreover, in the buyers' equilibrium any remaining buyer in block $G_h$ purchases at price $p_t$ if and only if her valuation exceeds $v^{(h)}_t$.
}
\end{corollary} 

\section{Optimal Price Sequence}\label{sec:OptimalPricing}
In this section, we characterize the optimal price sequence for any valuation distribution $F_v(\cdot)$ and network effects $E$ under the following regularity conditions.
\begin{assumption}\label{assump:generalnetworkNonUniformValuations}
\textup{ Matrix $E$ is invertible and the distribution $F_v(\cdot)$ is such that 
\begin{align*}
x- \frac{1- F_v(x)}{f_v(x)} - \frac{F_v(x)}{\mathbf{1}^T E^{-1} \mathbf{1}}
\end{align*}
is non-decreasing. Also, the matrix $E$ is such that $\mathbf{0} \le \left(E A \right)^{-1} \mathbf{1} \le \left( \mathbf{1}^T E^{-1} \mathbf{1} \right) \mathbf{1}$.
}
\end{assumption}
The assumption on $F_v(\cdot)$ is the analogous of the regularity condition (i.e., $x- \frac{1-F_v(x)}{f_v(x)}$ is non-decreasing) and is used to guarantee the uniqueness of the optimal price sequence. Indeed, without network effect (i.e., with $E=0$) this assumption reduces to the regularity condition which is used in optimal auction design (see \cite{myerson1981optimal}). The assumptions on the network matrix $E$ guarantee that the critical thresholds are interior (i.e., in $(0,1)$) and therefore enable their explicit characterization. As an example, we next show that for uniform valuations and a \emph{weakly-tied block model}, i.e., $E= I + \delta C$, for small $\delta$, Assumption \ref{assump:generalnetworkNonUniformValuations} holds. This network matrix captures situations in which inter block interactions are weak. 

\begin{lemma}\label{Lem:DeltaSmall}
Suppose $E= I + \delta C$ with $C \ge 0$. If $\delta <  \frac{1}{2 \mathbf{1}^T C \mathbf{1}}$, then Assumption \ref{assump:generalnetworkNonUniformValuations} holds for uniform valuations and $\alpha_i=\frac{1}{m}$, for $i \in [m]$.  
\end{lemma}

Our key result presented next provides an explicit characterization of optimal prices and optimal revenue as a function of the network effects.  

\begin{theorem}\label{thm:NonUniformValuations}
Suppose Assumption \ref{assump:generalnetworkNonUniformValuations} holds. The optimal price sequence in the limit (as $n \to \infty$) is given by 
\begin{align}\label{eq:NonUniformValuationsprices}
p_t & = (T-t) \frac{1-F_v(p_T)}{T \left( \mathbf{1}^T E^{-1} \mathbf{1} \right)} + p_T, \quad t=1, \dots, T,
\end{align}
where $p_T$ is the solution of 
\begin{align*}
p_T= \left( 1-F_v(p_T) \right) \left(\frac{1}{f_v(p_T)} - \frac{T-1}{T } \frac{1}{\mathbf{1}^T E^{-1} \mathbf{1} } \right).
\end{align*}
In addition, the optimal normalized revenue is
\begin{align}\label{eq:NonUniformValuationsRevenue}
p_T \left(1- F_v(p_T) \right)+ \frac{T-1}{2 T} \frac{1}{\mathbf{1}^T E^{-1} \mathbf{1}} \left(1- F_v(p_T) \right)^2. 
\end{align}
\end{theorem}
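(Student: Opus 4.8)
The plan is to mirror the proof of Theorem \ref{thm:GeneralGraphTwoperiodWithCommitment}, replacing every threshold $v$ by $F(v)$ so the linear algebra carries over with $\mathbf{F}(\mathbf{v}_t) := (F(v_t^{(1)}), \dots, F(v_t^{(m)}))$ in the role of $\mathbf{v}_t$. First I would specialize Lemma \ref{Lem:buyer'sbehavior} to the symmetric block model: the expected normalized mass of group $j$ purchasing at round $t$ is $\alpha_j\bigl(F(v_{t+1}^{(j)}) - F(v_t^{(j)})\bigr)$, so the indifference conditions read
\begin{align*}
EA\left(\mathbf{F}(\mathbf{v}_{t+1}) - \mathbf{F}(\mathbf{v}_t)\right) = (p_{t-1} - p_t)\mathbf{1}, \quad t = T, \dots, 2, \qquad \mathbf{v}_1 - p_1\mathbf{1} + EA(\mathbf{1} - \mathbf{F}(\mathbf{v}_2)) = \mathbf{0},
\end{align*}
with $\mathbf{F}(\mathbf{v}_{T+1}) = \mathbf{1}$. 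Telescoping the first set gives $\mathbf{1} - \mathbf{F}(\mathbf{v}_t) = (p_{t-1} - p_T)(EA)^{-1}\mathbf{1}$ for $t \ge 2$, and substituting into the boundary equation collapses it to the clean identity $\mathbf{v}_1 = p_T\mathbf{1}$ (all last-round thresholds equal $p_T$), whence the non-buyer mass is $\mathbf{1}^T A\,\mathbf{F}(\mathbf{v}_1) = F(p_T)$.

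Next I would reproduce the Bernstein-polynomial argument (Definition \ref{def:Berstein}, Theorem \ref{thm:Bersteinconvergence}) essentially verbatim, now with per-buyer success probabilities $1 - F(v_t^{(j)})/F(v_{t+1}^{(j)})$; concentration of the multivariate binomial yields the limiting normalized revenue $\sum_{t=1}^T p_t\,\mathbf{1}^T A(\mathbf{F}(\mathbf{v}_{t+1}) - \mathbf{F}(\mathbf{v}_t))$. Writing $\kappa := \mathbf{1}^T E^{-1}\mathbf{1}$ and using $\mathbf{1}^T A(EA)^{-1}\mathbf{1} = \kappa$, the indifference conditions turn each round-$t$ mass ($t \ge 2$) into $\kappa(p_{t-1}-p_t)$, while $\mathbf{v}_1 = p_T\mathbf{1}$ turns the last-round mass into $1 - F(p_T) - \kappa(p_1 - p_T)$. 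The seller's problem becomes
\begin{align*}
\max_{p_T, \dots, p_1}\ p_1\bigl(1 - F(p_T) - \kappa(p_1 - p_T)\bigr) + \kappa\sum_{t=2}^T p_t(p_{t-1} - p_t),
\end{align*}
which is exactly \eqref{eq:revenueoptimizationCommitmentGeneral} with the constant $1$ replaced by $1 - F(p_T)$ and the coupling $-p_1 F(p_T)$ made nonlinear.

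I would then read the structure off the first-order conditions. For interior indices $2 \le t \le T-1$ stationarity is $p_{t-1} - 2p_t + p_{t+1} = 0$, i.e. the path is an arithmetic sequence — this is the assumption-free step establishing linearity. Writing $p_t = p_T + (T-t)\delta$, the condition $\partial R/\partial p_1 = 0$ reduces to $1 - F(p_T) = \kappa T\delta$, giving the slope $\delta = (1 - F(p_T))/(T\kappa)$ of \eqref{eq:NonUniformValuationsprices}, while $\partial R/\partial p_T = 0$ reduces (after using $\kappa T\delta = 1 - F(p_T)$) to $p_1 f(p_T) = 1 - F(p_T)$; eliminating $p_1 = p_T + (T-1)\delta$ yields the stated fixed-point equation for $p_T$. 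Since each of the $T$ rounds then sells the same mass $(1-F(p_T))/T$, the revenue equals $\tfrac{1-F(p_T)}{T}\sum_{t=1}^T p_t = (1-F(p_T))\bigl(p_T + \tfrac{T-1}{2}\delta\bigr)$, which is \eqref{eq:NonUniformValuationsRevenue}.

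The main obstacle is certifying that this stationary point is the global maximum and that the thresholds stay interior, and this is exactly where Assumption \ref{assump:generalnetworkNonUniformValuations} enters. The objective is quadratic in $p_1, \dots, p_{T-1}$, but the terms $-p_1 F(p_T) + \kappa p_1 p_T$ and $-\kappa p_T^2$ make the $p_T$-diagonal Hessian entry equal to $-2\kappa - p_1 f'(p_T)$; signing the principal minors of the resulting (essentially tridiagonal) Hessian is the multivariate analogue of the standard single-variable regularity condition, and I expect the hypotheses that $f'/f$ is non-increasing and $xf(x)$ is non-decreasing to be used precisely here — together with monotonicity of the virtual value $p - (1-F(p))/f(p)$ to guarantee a unique admissible root $p_T$. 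Interiority ($\mathbf{0} \le \mathbf{v}_1$ and $\mathbf{F}(\mathbf{v}_{t+1}) \ge \mathbf{F}(\mathbf{v}_t)$ with all purchase probabilities in $[0,1]$, so the projection $\mathcal{P}_{[0,1]}$ is inactive) then follows from $E^{-1}\mathbf{1} \ge \mathbf{0}$, $\delta \ge 0$, and $\mathbf{1}^T E^{-1}\mathbf{1} \ge f(x)$, paralleling \eqref{eq:interior1}--\eqref{eq:interior2}, the last condition ensuring the slope is small enough that consecutive thresholds remain ordered.
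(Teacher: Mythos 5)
Your proposal is correct and follows essentially the same route as the paper: the same $F$-transformed indifference conditions, the same telescoping (the paper equivalently computes $F(\mathbf{v}_2)$ rather than $\mathbf{v}_1 = p_T\mathbf{1}$, but these are the same identity), the same Bernstein-limit revenue $\sum_t p_t \mathbf{1}^T A\left(F(\mathbf{v}_{t+1}) - F(\mathbf{v}_t)\right)$, and the same first-order conditions yielding the arithmetic price path, slope, and fixed-point equation for $p_T$. The only difference is in the concavity certification, where you correctly identify the Hessian entries $-2\kappa - p_1 f'(p_T)$ and $\kappa - f(p_T)$ and the role of Assumption \ref{assump:generalnetworkNonUniformValuations}, but propose signing principal minors, whereas the paper completes this step by applying the Gershgorin circle theorem to the (normalized) tridiagonal-plus-corner Hessian.
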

\begin{proof}
We provide the main steps of the proof for $T=2$ and uniform valuations. The complete proof is presented in the Appendix. 

We first characterize the buyers' equilibrium given a price sequence as a function of network matrix $E$ and then find the optimal price sequence. Using Corollary \ref{Cor:SymmetricEquil}, the indifference condition for any block $h \in [m]$ in the first period becomes 
\begin{align*}
v^{(h)}_2- p_2 &= v^{(h)}_2-p_1+ \sum_{k \in [m]} \frac{E_{hk}}{n} \mathbb{E}\left[ \sum_{i \in G_k} \mathbf{1}\{i \text{ buys at } p_2 \} \right]  = v^{(h)}_2-p_1+ \sum_{ k \in [m]} E_{hk} \alpha_k (1- v^{(k)}_2),
\end{align*}
where we used uniform distribution for valuations in the last equality. 
By letting $\mathbf{v_2}=(v^{(1)}_2, \dots, v^{(m)}_2)$ and using the definition of $A$ and $E$, we can write this equation in compact form as 
\begin{align}\label{eq:prooftemp1}
 \mathbf{v_2} = \mathbf{1} - \left( EA \right)^{-1} \mathbf{1} (p_1- p_2).
\end{align}
In the second period, any remaining buyer $i$ in block $G_h$ buys the item if and only if 
\begin{align*}
v^{(i)}- p_1 + \sum_{k \in [m]} \sum_{j \in G_k} \frac{E_{hk}}{n}\mathbf{1}\{v^{(j)} \ge v_2^{(k)}\} \ge 0,
\end{align*} 
which for uniform valuations happens with probability 
\begin{align*}
\mathcal{P}_{[0,1]} \left(1- \frac{p_1 - \sum_{k \in [m]} \sum_{j \in G_k} \frac{E_{hk}}{n} \mathbf{1}\{v^{(j)} \ge v_2^{(k)}\}}{v_2^{(h)}} \right).
\end{align*}
Therefore, the monopolist's expected revenue can be written as
\begin{footnotesize}
\begin{align}\label{eq:revexptected}
\sum_{1 \le k_h \le n_h} & \left( \prod_{h=1}^m \binom{n_h}{k_h} (1- v_2^{(h)})^{k_h} (v_2^{(h)})^{n_h-k_h} \right)  \left( p_2 \sum_{h=1}^m k_h + p_1 \sum_{h=1}^m (n_h - k_h) \mathcal{P}_{[0,1]} \left(1- \frac{p_1 - \sum_{h' \in [m]}  E_{hh'} \frac{ k_{h'}}{n} }{v_2^{(h)}} \right) \right),
\end{align}
\end{footnotesize}
where the first term of each summand is the probability of a multivariate Binomial random variable capturing the probability of the event that in the first period for each block $h \in [m]$, $k_h$ out of $n_h$ many buyers purchase the item. The second term of each summand is the expected revenue of the monopolist given this event. In particular, for each block $h$, $k_h$ buyers purchase the product at price $p_2$ and each of the remaining $(n_h-k_h)$ buyers purchase the product at price $p_1$ with probability $\mathcal{P}_{[0,1]} \left(1- \frac{p_1 - \sum_{h' \in [m]}  E_{hh'} \frac{ k_{h'}}{n} }{v_2^{(h)}} \right)$. 

To obtain a closed-form expression for the optimal expected revenue, we consider the limiting normalized revenue as $n \to \infty$ and then use Bernstein polynomial convergence presented in Theorem \ref{Thm:BersteinSimplex}. Letting 
\begin{align*}
f\left(\frac{k_1}{n_1}, \dots, \frac{k_m}{n_m} \right) = \left( p_2 \sum_{h=1}^m \frac{k_h}{n_h} \alpha_h  + p_1 \sum_{h \in [m]} \left( 1 - \frac{k_h}{n_h} \right) \alpha_h \mathcal{P}_{[0,1]} \left(1- \frac{p_1 - \sum_{h'\in [m]} E_{hh'} \alpha_{h'} \frac{k_{h'}}{n_{h'}} }{v_2^{(h)}} \right) \right),
\end{align*}
the normalized expected revenue given in Eq. \eqref{eq:revexptected}  becomes 
\begin{align*}
\sum_{1 \le k_h \le n_h} \left( \prod_{h=1}^m \binom{n_h}{k_h} (1- v_2^{(h)})^{k_h} (v_2^{(h)})^{n_h-k_h} \right)   f\left(\frac{k_1}{n_1}, \dots, \frac{k_m}{n_m} \right).
\end{align*}
Therefore, using Theorem \ref{Thm:BersteinSimplex} the limiting normalized revenue becomes
\begin{align}\label{eq:prooftemp3}
& p_2 \sum_{h=1}^m \alpha_h (1- v_2^{(h)})+ \sum_{h} p_1 (\alpha_h - (1-v^{(h)}_2) \alpha_h) \mathcal{P}_{[0,1]} \left(1- \frac{p_1 - \sum_{h'} E_{hh'} (1-v^{(h')}_{2}) \alpha_{h'}}{v_2^{(h)}} \right)  \nonumber \\
& = p_2 +(p_1- p_2) (\alpha^T \mathbf{v}_2) - p_1^2 + p_1 \alpha^T \left( EA \right) (\mathbf{1} - \mathbf{v}_2),
\end{align}
where the second equality follows from Assumption \ref{assump:generalnetworkNonUniformValuations} and Eq. \eqref{eq:prooftemp1}, as we will show at the end of this proof. Combining Eqs. \eqref{eq:prooftemp1} and \eqref{eq:prooftemp3}, the normalized limiting revenue denoted by $h(p_1, p_2)$ becomes 
\begin{align*}
h(p_1, p_2) & = p_2 +(p_1- p_2) \left(\alpha^T \left( \mathbf{1} - \left( EA \right)^{-1} \mathbf{1} (p_1- p_2)\right) \right) - p_1^2 + p_1 \alpha^T \mathbf{1}(p_1- p_2) \\
& = p_1 - \left(\mathbf{1}^T E^{-1} \mathbf{1} \right) (p_1- p_2)^2 - p_1 p_2.
\end{align*}
We next find $(p_1, p_2)$ that maximizes $h(p_1, p_2)$. The first order conditions of $h(p_1, p_2)$ are
\begin{align*}
\frac{\partial h(\cdot)}{\partial p_1} =  1- 2 (p_1- p_2) \left(\mathbf{1}^T E^{-1} \mathbf{1} \right)- p_2=0, \quad \frac{\partial h(\cdot)}{\partial p_2} =  2 (p_1- p_2) \left(\mathbf{1}^T E^{-1} \mathbf{1} \right)- p_1=0,
\end{align*}
which leads to 
\begin{align}\label{eq:pricetwoseq}
p_1= \frac{2 \left(\mathbf{1}^T E^{-1} \mathbf{1} \right)}{4 \left(\mathbf{1}^T E^{-1} \mathbf{1} \right)-1}, \quad p_2= \frac{2 \left(\mathbf{1}^T E^{-1} \mathbf{1} \right)-1}{4 \left(\mathbf{1}^T E^{-1} \mathbf{1} \right)-1}.
\end{align}
The solution of the first order conditions indeed lead to the maximum of $h(p_1, p_2)$. We can see this by taking second order derivative of $h(p_1, p_2)$ and showing that the Hessian is negative semidefinite. In particular, taking second order derivative of $h(p_1, p_2)$ leads to the following Hessian
\begin{align*}
M = \begin{pmatrix} 
 -2  \left(\mathbf{1}^T E^{-1} \mathbf{1} \right) && 2  \left(\mathbf{1}^T E^{-1} \mathbf{1} \right)-1  \\\\
2  \left(\mathbf{1}^T E^{-1} \mathbf{1} \right)-1 && -2  \left(\mathbf{1}^T E^{-1} \mathbf{1} \right)
\end{pmatrix},
\end{align*}
which is negative definite. This can be seen by noting that 
$M_{11} < 0$ and $\text{det}(M)= 4 \left(\mathbf{1}^T E^{-1} \mathbf{1} \right) -1 > 0$, which holds because Assumption \ref{assump:generalnetworkNonUniformValuations} for uniform distributions guarantees $\mathbf{1}^T E^{-1} \mathbf{1}  \ge \frac{1}{2}$.\footnote{A symmetric $2 \times 2$ matrix $M$ is negative definite if $M_{11} < 0$ and $\text{det}(M) > 0$.} Therefore, the maximum of $h(p_1, p_2)$ is attained at the solution of the first order conditions. Plugging $p_1$ and $p_2$ from Eq. \eqref{eq:pricetwoseq} in $h(p_1, p_2)$, the optimal revenue becomes
\begin{align*}
\frac{\left(\mathbf{1}^T E^{-1} \mathbf{1} \right)}{4 \left(\mathbf{1}^T E^{-1} \mathbf{1} \right)-1}. 
\end{align*}
Finally, we show that the prices $p_1$ and $p_2$ and their corresponding $\mathbf{v}_2$ guarantee $1- \frac{p_1 - \sum_{h'} E_{hh'} (1-v^{(h')}_{2}) \alpha_{h'}}{v_2^{(h)}} \in [0,1]$ for all $h \in [m]$, showing that the projection operators used in Eq. \eqref{eq:prooftemp3} are identity. This is equivalent to 
\begin{align}\label{eq:proofboundary}
0 \le p_1 - \sum_{h'} E_{hh'} (1-v^{(h')}_{2}) \alpha_{h'} \le v_2^{(h)}, \quad \text{ for all }h \in [m].
\end{align}
Using Eq. \eqref{eq:prooftemp1}, we can rewrite Eq. \eqref{eq:proofboundary} in vector form as $\mathbf{0} \le p_1 \mathbf{1}- \left( EA \right)\left(\mathbf{1} - \mathbf{v}_2 \right) \mathbf{1} \le \mathbf{v_2}$. 
The lower bound evidently holds as $p_1 \mathbf{1}- \left( EA \right)\left(\mathbf{1} - \mathbf{v}_2 \right)= p_2 \mathbf{1} \ge 0$. The upper bound after plugging in $p_2$ and $p_1$ in Eq. \eqref{eq:prooftemp1} and finding $\mathbf{v}_2$, becomes
\begin{align*}
\frac{2 \left(\mathbf{1}^T E^{-1} \mathbf{1} \right)-1 }{4 \left(\mathbf{1}^T E^{-1} \mathbf{1} \right)-1} \mathbf{1} \le \mathbf{1} - \left(E A \right)^{-1} \mathbf{1} \frac{1}{4 \left(\mathbf{1}^T E^{-1} \mathbf{1} \right)-1},
\end{align*}
which holds because of Assumption \ref{assump:generalnetworkNonUniformValuations} and in particular $\left(E A \right)^{-1} \mathbf{1} \le \left(\mathbf{1}^T E^{-1} \mathbf{1} \right) \mathbf{1}$. This completes the proof for the special case of $T=2$ and uniform valuations. 
\end{proof}
Theorem \ref{thm:NonUniformValuations} shows that the optimal price sequence is linearly increasing. Moreover, the impact of the network matrix on the optimal price sequence and the optimal revenue is captured by a single network measure $\frac{1}{\mathbf{1}^T E^{-1} \mathbf{1}}$.
\begin{definition}\label{def:aggE}
\textup{
 We refer to the term $\frac{1}{\mathbf{1}^T E^{-1} \mathbf{1}}$ as \emph{\NE} and denote it by $\Psi(E)$.
}
\end{definition}
We next show the properties of the optimal normalized revenue as well as the optimal price sequence as a function of the \NE~and the number of rounds.
\begin{proposition}\label{Lem:PsiIncreasing}
Suppose Assumption \ref{assump:generalnetworkNonUniformValuations} holds. The optimal normalized revenue is increasing in $T$ and increasing and convex in \NE~$\Psi(E)$. Moreover, the slope of the optimal price sequence (i.e., the difference of the optimal prices in two consecutive rounds) is increasing in $\Psi(E)$.
\end{proposition}
Theorem \ref{thm:NonUniformValuations} and Proposition \ref{Lem:PsiIncreasing} yield the following insights:
\begin{figure}[t]
    \centering
    \begin{subfigure}[b]{0.4\textwidth}
        \includegraphics[width=\textwidth]{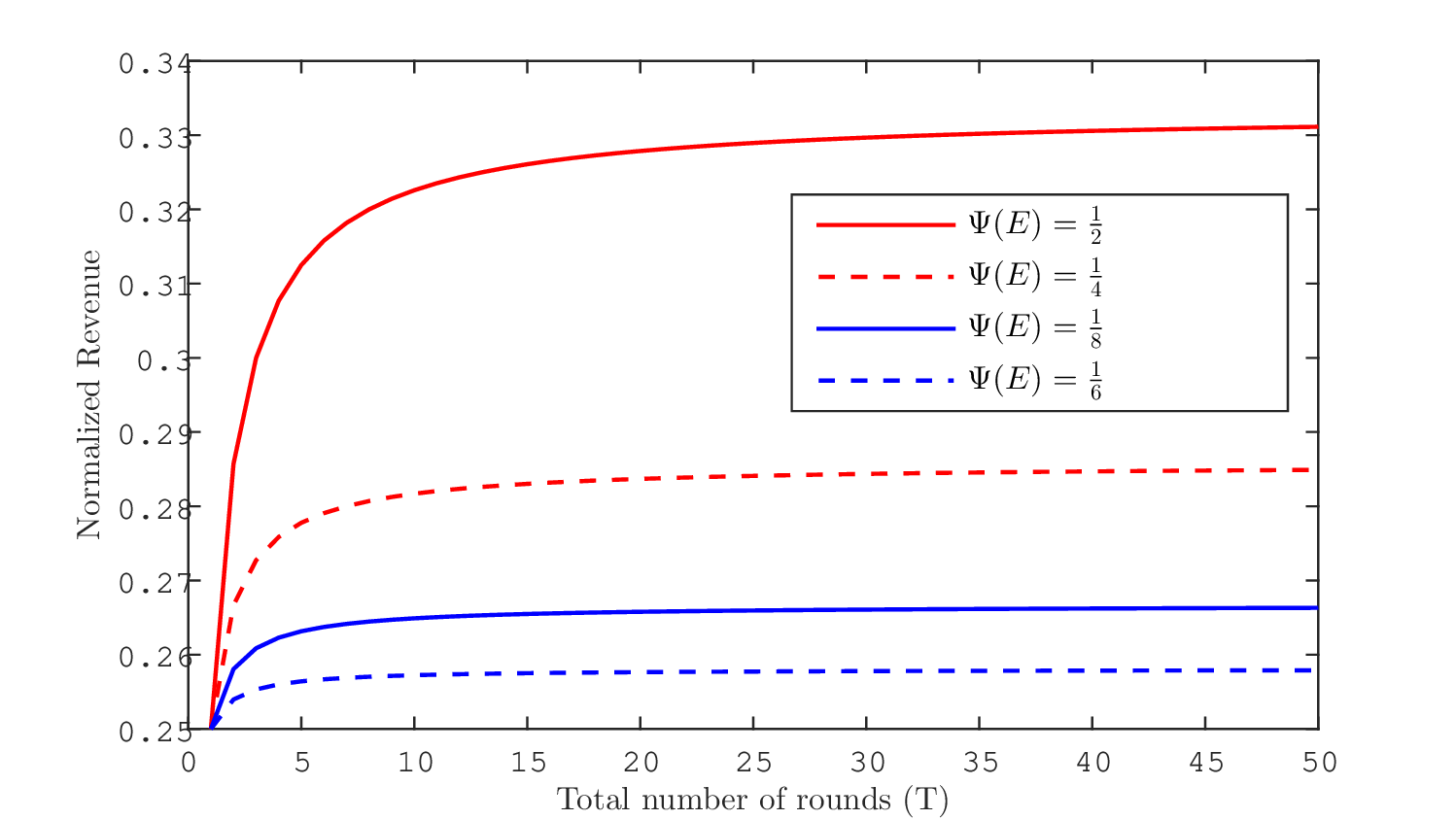}
        \caption{}
         \label{fig:GeneralGraphTperiodWithCommitmentRevenue}  
    \end{subfigure}
    ~
    \begin{subfigure}[b]{0.4\textwidth}
        \includegraphics[width=\textwidth]{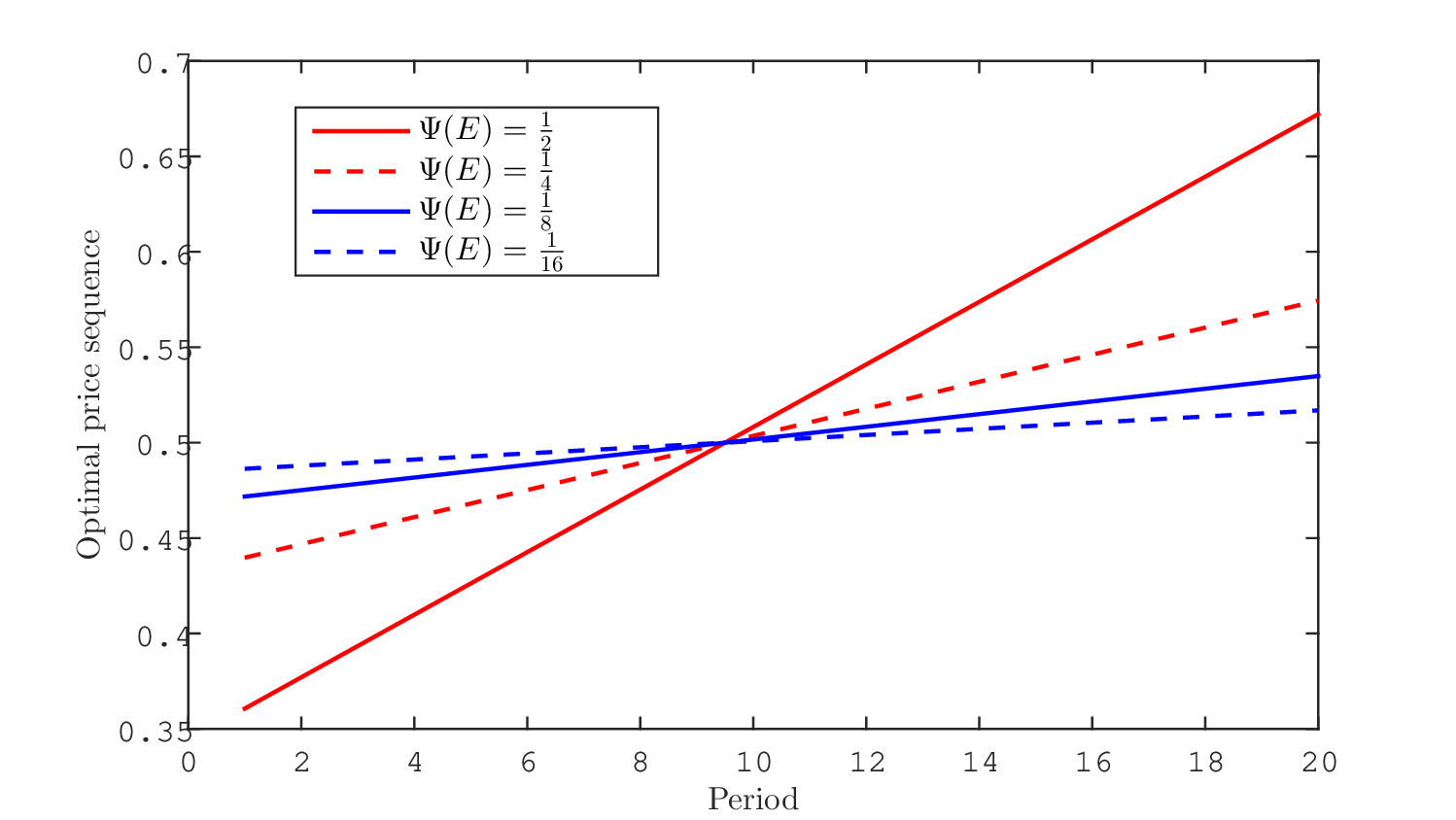}
       \caption{}
 \label{fig:GeneralGraphTperiodWithCommitmentPricepath}
    \end{subfigure}
    \caption{(a) optimal normalized revenue as a function of the number of rounds for uniform distribution and various \NE s $\Psi(E)$ and (b) optimal price sequence for uniform distribution, $T=20$, and various \NE s $\Psi(E)$. }
\end{figure}
\begin{itemize}

\item Proposition \ref{Lem:PsiIncreasing} shows that the optimal normalized revenue is an increasing convex function in \NE. Intuitively, this holds because increasing $\Psi(E)$ increases the utility of buyers in two different ways: (i) \emph{direct effect}: as $\Psi(E)$ increases, keeping purchase probability of other buyers the same, each buyer enjoys a higher network effect, and  (ii) \emph{indirect effect}: as $\Psi(E)$ increases, the purchase probability of other buyers in previous rounds increases (i.e., the critical thresholds decrease), leading to higher utility. 
\item Figure \ref{fig:GeneralGraphTperiodWithCommitmentRevenue} shows the optimal normalized revenue as a function of the number of rounds, illustrating that the optimal revenue is increasing in $T$. The optimal revenue approaches the limiting normalized revenue as $T \to \infty$ relatively fast. For instance, for uniform valuations if we want to obtain $q = 95\%$ of the optimal revenue (i.e., revenue of infinitely many rounds), then for $\Psi(E)= \frac{1}{5}$, $T= 3$ rounds suffice and for $\Psi(E)=\frac{4}{5}$, $T=13$ rounds suffice.  
\item Figure \ref{fig:GeneralGraphTperiodWithCommitmentPricepath} shows the optimal price sequence for various \NE s, illustrating that the slope of the optimal price sequence is increasing in $\Psi(E)$. Proposition \ref{Lem:PsiIncreasing} shows the slope of the optimal price sequence is increasing in the \NE. Intuitively, this holds because as $\Psi(E)$ increases, the past purchases contribute more to the utility of buyers, incentivizing them to purchase at a higher price. 
\item Buyers with valuations below the critical threshold of the last period $\mathbf{v}_1$ do not buy the item. This threshold is decreasing in $\Psi(E)$ and $T$. Therefore, increasing $\Psi(E)$ or $T$ increases the number of users who purchase the item.
\end{itemize}


\section{Aggregate Network Effect}\label{sec:networkeffects}
In Theorem \ref{thm:NonUniformValuations} we characterized the optimal price sequence as a function of the \NE~$\Psi(E)$. We now study in more detail the effect of network properties on $\Psi(E)$ and the optimal price sequence and revenue. In this regard, we consider a weakly-tied block model, where $E=I+\delta C$ for $C \ge 0$ and some sufficiently small $\delta \ge 0$. This network matrix represents a natural setting in which the utility gain of buyers within each block is larger than the ones across blocks. Also, note that for this network matrix, using Lemma \ref{Lem:DeltaSmall}, for uniform valuations and $A= \frac{1}{m}I$, Assumption \ref{assump:generalnetworkNonUniformValuations} and therefore Theorem \ref{thm:NonUniformValuations} holds. In particular, a second order Taylor approximation of $\Psi(E)$ leads to
\begin{align}\label{eq:remark:approx1:second}
\frac{1}{m}+ \delta \frac{\mathbf{1}^T C \mathbf{1}}{m^2} + \delta^2 \frac{\left( \mathbf{1}^T C \mathbf{1} \right)^2 - m \mathbf{1}^T C^2 \mathbf{1}}{m^3}. 
\end{align}
Proposition \ref{Lem:PsiIncreasing} together with Eq. \eqref{eq:remark:approx1:second} leads to the following implications.
\begin{itemize}
\item The second term shows that higher sum of inter blocks utility gains, i.e, $\mathbf{1}^T C \mathbf{1}$, leads to higher $\Psi(E)$ and therefore higher revenue. 
\item The third term shows that for a given $\mathbf{1}^T C \mathbf{1}$, the highest revenue is obtained for a network with the minimum $\mathbf{1}^T C^2 \mathbf{1}$ where
\begin{align*}
\mathbf{1}^T C^2 \mathbf{1} =\sum_{i, j=1}^m \sum_{k=1}^m C_{ik} C_{kj} = \sum_{i, k=1}^m C_{ik} d_k^{(\text{out})} = \sum_{k=1}^m d_k^{(\text{out})}  d_k^{(\text{in})}.
\end{align*}
Here $d_k^{(\text{out})}$ and $d_k^{(\text{in})}$ correspond to the out-degree and in-degree in the network matrix $C$, respectively. 
This shows that the highest revenue is obtained for a network with minimum $\sum_{k} d_k^{(\text{out})}  d_k^{(\text{in})}$. With a small $\sum_{k} d_k^{(\text{out})}  d_k^{(\text{in})}$, the influential blocks (i.e., with high out-degrees) are less influenced by other blocks (i.e., have low in-degrees). Therefore, the influential blocks have little incentive to postpone their purchase and prefer to purchase earlier (i.e., they have a low critical threshold). This in turn incentivizes users from other blocks to purchase at higher prices in subsequent periods (because of the higher network effect term), increasing the revenues of the monopolist. As an example, for $C \in \{0,1\}^{m \times m}$ (i.e., $C_{ij}=1$ if there exists an edge from $j$ to $i$), a bipartite directed graph has the highest revenue ($d_i^{\text{(in)}} d_i^{\text{(out)}}=0$ for all $i \in [m]$). 

We next show that the minimum $\sum_{k=1}^m d_k^{(\text{out})}  d_k^{(\text{in})}$ (for a given in-degree sequence or fixed $\mathbf{1}^T C \mathbf{1}$) is obtained for a network with the most degree sequence ``imbalance'', where the imbalance sequence is defined as $(d_1^{\text{(in)}}- d_1^{\text{(out)}}, \dots, d_m^{\text{(in)}}- d_m^{\text{(out)}})$ (see \cite{mubayi2001realizing}). 
\begin{proposition}\label{Lem:Majorization}
Consider two network matrices $C$ and $\tilde{C}$ with the same in-degree sequence $d_1^{\text{(in)}} \ge \dots \ge d_m^{\text{(in)}}$ and different out-degree sequences denoted by $(c_1^{\text{(out)}}, \dots, c_m^{\text{(out)}})$ and $(\tilde{c}_1^{\text{(out)}}, \dots, \tilde{c}_m^{\text{(out)}})$, respectively. If the imbalance sequence of $C$ majorizes the imbalance sequence of $\tilde{C}$, then $\mathbf{1}^T C^2 \mathbf{1} \le \mathbf{1}^T \tilde{C}^2 \mathbf{1}$.\footnote{We say a sequence $(a_1, \dots, a_m)$ majorizes $(b_1, \dots, b_m)$ if and only if for any $1 \le i \le m$ we have $\sum_{j=1}^i a_i \ge \sum_{j=1}^i b_i$. } 
\end{proposition}

The next example illustrates the effect of imbalance sequence on revenue by comparing the revenues of a directed chain, a directed ring, and a directed star network. 
\begin{example}\label{Example:three}
\textup{
We consider a directed chain, a directed ring, and a directed star network (where the edges are from the periphery nodes to the center node) with $m=10$ blocks. For all of these networks we let the network matrix be $E=I+ \delta C$, $\delta=.29$, $\sum_{i, j}C_{ij}=30$, and the weight of different edges in each network be the same. The first and second terms of Eq. \eqref{eq:remark:approx1:second} are the same for these three networks, but the third terms are different. In particular, the term $\sum_{k=1}^m d_k^{(\text{out})}  d_k^{(\text{in})}$ for directed ring, directed chain, and directed star are $81.8$, $81$, and $0$, respectively. Therefore, the \NE~and the revenue of directed star is higher than directed chain, which is higher than directed ring. Figure \ref{fig:NE_fig1} illustrates the revenues of these three networks as a function of the number of periods. 
}
\end{example}
\begin{figure}[t]
    \centering
        \includegraphics[width=.5\textwidth]{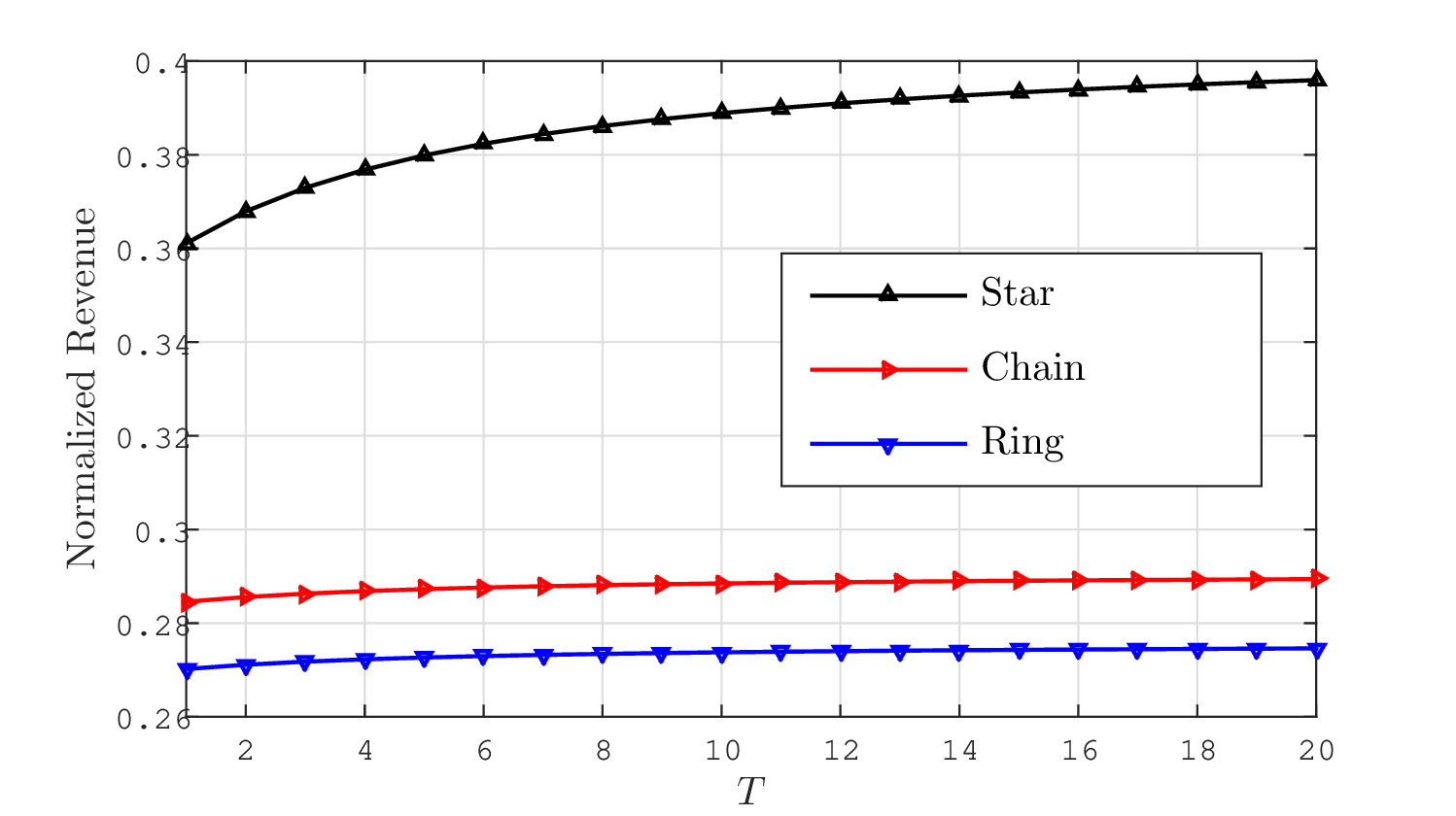}
    \caption{The normalized revenue as function of the number of periods for the networks of Example \ref{Example:three}:  directed chain, directed ring, and directed star networks. 
 }\label{fig:NE_fig1}
\end{figure}
\item For a given $\mathbf{1}^T C \mathbf{1}$, among balanced networks for which $d_k^{\text{(in)}} = d_k^{\text{(out)}}$, for all $k \in [m]$ (note that a symmetric network, i.e., $C_{ij}=C_{ji}$ is balanced) the highest revenue is obtained by a regular network, i.e., a network with $d_j^{\text{(out)}} = d_k^{\text{(out)}}$ for all $j,k \in [m]$. This follows because we have 
\begin{align*}
\mathbf{1}^T C^2 \mathbf{1} = \sum_{i, k=1}^m C_{ik} \left( \sum_{j=1}^m C_{kj} \right) = \sum_{k=1}^m \left( d_{k}^{\text{out}} \right)^2 \ge \frac{1}{m} \left( \sum_{k=1}^m d_{k}^{\text{out}} \right)^2,
\end{align*}
where we used the balancedness in the last equality and Cauchy-Schwarz in the inequality. The inequality becomes an equality if and only if for any $k$, we have $d_k^{\text{(out)}} = \frac{1}{m}\sum_{i,j=1}^m C_{ij}$, showing that a regular network has the highest revenue. Here, because of the balancedness, each block is equally influential and influenced by others. Now suppose the network is not regular and consider the block with the highest out-degree (which is equal to its in-degree). The customers in this block have a high in-degree and therefore prefer to postpone their purchase to future rounds (because they will obtain a large utility gain from other purchases). Thus, the monopolist cannot fully utilize the network effect of this block on other blocks to increase the revenue. 
\end{itemize}
\section{Price Discrimination}\label{sec:PriceDiscrimination}
In this section, we study price discrimination and characterize the optimal price sequence. With price discrimination, at each round the price offered to different blocks can be different. Throughout this section we assume that the valuations are uniform, but, the results can be generalized to any valuation distribution under proper regularity assumptions. 
\begin{proposition}\label{thm:pricediscrimination}
Suppose the valuations are uniform and $I-EA$ is positive semidefinite.
 The optimal price sequence in the limit (as $n \to \infty$) is given by
\begin{align}\label{eq:thm:pricediscrimination:pricesequence}
\mathbf{p}_t= \frac{T-t}{T} \left(I- \frac{T-1}{T} EA \right)^{-1} EA \mathbf{p}_T + \mathbf{p}_T , \quad t=T, \dots, 1,
\end{align}
where $\mathbf{p}_T$ is the solution of 
\begin{align}\label{eq:thm:pricediscrimination:pricesequence2}
\mathbf{p}_T= \frac{1}{2} \mathbf{1} - \frac{T-1}{4T} \left(I - \frac{T-1}{2T} EA \right)^{-1} EA \mathbf{1}.
\end{align}
\end{proposition}
Note that for a symmetric weakly-tied block model, i.e., $E= I + \delta C$, similar to Lemma \ref{Lem:DeltaSmall}, for small enough $\delta$, Assumption \ref{assump:generalnetworkNonUniformValuations} holds. 

%
We can rewrite the optimal price sequence as 
\begin{align*}
\mathbf{p}_t=  \frac{T-t}{T-1} \mathbf{b}\left(  EA, \frac{T-1}{2T} \right) - \frac{1}{T-1} \mathbf{1} + \mathbf{p}_T, \quad t=T, \dots, 1,  
\end{align*}
where $\mathbf{b}(E, \beta)$ is the Bonacich centrality with parameter $\beta$ in network $E$ defined as $\mathbf{b}(E, \beta)= \left(I - \beta E \right)^{-1} \mathbf{1}$ (see \cite{bonacich1987power}). 
Proposition \ref{thm:pricediscrimination} establishes that the optimal price sequence is linear and the slope of this linear price sequence is given by a ``weighted Bonacich centrality''. This implies that in the early periods the monopolist offers a lower price to more central blocks and rapidly increases the price offered to them in subsequent periods. This is to encourage more central buyers to purchase in the early periods which in turn incentivizes more buyers (due to larger centrality) to purchase in subsequent periods. 

We next compare the optimal price sequence in this setting with that of the static pricing studied in \cite{candogan2012optimal}. The buyer's strategies in a setting with one round of pricing is given by a threshold rule where buyers in block $i$ purchase the item if their valuation exceed $v^{(i)}$. These thresholds satisfy the indifference condition
\begin{align*}
\mathbf{v}- \mathbf{p}+ EA \left(\mathbf{1} - \mathbf{v} \right)=\mathbf{0},
\end{align*}
where $\mathbf{v}=\left(v^{(1)}, \dots, v^{(m)} \right)$. This leads to $\mathbf{v}= \left( I - EA \right)^{-1} \left(\mathbf{p} - EA \mathbf{1} \right)$.\footnote{For any $k \in [m]$ the expected normalized number of customers in block $k$ who purchase the item is $1- v^{(k)}$. Therefore, our model becomes identical to the consumption model of \cite{candogan2012optimal} by letting the ``consumption level'' of block $k$ be equal to $1-v^{(k)}$.} 
Therefore, the normalized revenue becomes
\begin{align}\label{eq:OptRevSingleRound}
\mathbf{p}^T A \left(\mathbf{1} - \mathbf{v} \right) 
= \mathbf{p}^T A \left(I - EA \right)^{-1} \mathbf{1} - \mathbf{p}^T A \left(I - EA \right)^{-1} \mathbf{p}.
\end{align}
Maximizing Eq. \eqref{eq:OptRevSingleRound} over $\mathbf{p}$ leads to 
\begin{align}\label{eq:OptPriceSingleRound}
\mathbf{p} = \mathbf{1}- \frac{1}{2} \left(I - EA \right) \left(I - \frac{EA +(EA)^T}{2} \right)^{-1} \mathbf{1},
\end{align}
which is the same as the optimal prices obtained for static pricing in \citet[Theorem 2]{candogan2012optimal} with $G=EA$. 
Comparing our result in Proposition \ref{thm:pricediscrimination} and Eq. \eqref{eq:OptPriceSingleRound}, we note that the optimal price given in Eq. \eqref{eq:OptPriceSingleRound} for symmetric networks (i.e., $EA=(EA)^T$) is $\mathbf{p}= \frac{1}{2} \mathbf{1}$ (see also \citet[Corollary 1]{candogan2012optimal}).
However, in our setting with more than one period, even for symmetric networks, the optimal price sequence depends on the network structure (see Eqs. \eqref{eq:thm:pricediscrimination:pricesequence} and \eqref{eq:thm:pricediscrimination:pricesequence2}) and the monopolist obtains revenue gains from the network effects which is in sharp contrast with \citet[Section 5]{candogan2012optimal}.

\section*{Acknowledgment}
We thank participants at several seminars and conferences for useful suggestions and comments. We specially thank Daron Acemoglu, Kimon Drakopolous, Saeed Alaei, and Ozan Candogan for very helpful discussions and suggestions.
\section{Conclusion}\label{sec:Conclusion}
We study the problem of finding the optimal price sequence for a product given a set of customers with heterogeneous
valuations that strategically decide their purchase time (if any). The product features network effects, i.e., the utility of each customer depends on the price sequence, her valuation, as well as a weighted number of other buyers who have purchased the item. We establish that the problem of finding the optimal price sequence is a tractable one and explicitly characterize the optimal price sequence as a function of the network structure. Our main result identifies a novel dependence on the network structure: sum of the entries of the inverse of network matrix, termed \NE. From a structural
perspective, the optimal price sequence is always linearly increasing with a slope that is increasing in the \NE. We establish that increasing network imbalance increases the \NE~which, in turn, increases the revenue.
The framework and results we present in this paper lay the ground for a potential new approach to the class of dynamic pricing problems with strategic customers and combinatorial structures. Avenues for future research include the expansion of the set of problems that may be tackled through the present approach. For example, the question of dynamic strategic pricing with limited inventory and strategic buyers and seller is a natural extension.
\section*{Appendix}\label{sec:app}
\subsection*{Proof of Proposition \ref{Pro:SingleCrrossing}}\label{App:Proof:Pro:SingleCrrossing}
\textbf{Proof of part (a):} if $p_{s+1} > p_{s} $ for some $s=T-1, \dots, 1$, then none of the buyers will buy the item at round $s+1$ (note that the period with price $p_s$ is after the period with price $p_{s+1}$). This is because if they wait until the next round, the price decreases and the network effect term of their utility weakly increases (it either remains the same or increases). Therefore, the equilibrium path (the purchase decision of buyers) in the continuation game (i.e., in rounds $s+1, s, \dots, 1$) is the same as a game in which we have $p_s=p_{s+1}$. Staring from the last period (i.e., $s=1$) and repeatedly applying this argument shows the existence of a non-decreasing price sequence with the same equilibrium path as the optimal price sequence. 
\\\textbf{Proof of part (b):} for a given equilibrium, suppose that buyer $i$ with valuation $v$ finds it optimal to purchase at price $p_t$ in round $t$. Then it must be the case that her utility from purchasing in round $t$ is not smaller than her expected utility from postponing the purchase to future rounds (i.e., not purchasing at round $t$). Therefore, we must have 
\begin{align*}
& v- p_t + \sum_{j} g_{ij} \mathbf{1} \{j \text{ bought at } s > t \} \\
& \ge \mathbb{E}\left[ \sum_{s=t-1}^1 \mathbf{1}\{i \text{ buy at price } p_s\} \left( v- p_s + \sum_{j} g_{ij} \mathbf{1} \{j \text{ bought at } s' > s \} \right) \mid H_t \right] \\
& = \sum_{s=t-1}^1 \mathbb{P}\left[ i \text{ buys at } s \right] \left( v- p_s + \mathbb{E}\left[ \sum_{j} g_{ij} \mathbf{1} \{j \text{ bought at } s' > s \} \mid H_t \right]\right).
\end{align*}
Since $\sum_{s=t-1}^0 \mathbb{P}\left[ i \text{ buys at } s \right] \le 1$ (there is a probability with which $i$ does not buy at all), the derivative in $v$ of the left hand side of this inequality is at least as large as that of the right hand side. Therefore, buyer $i$ with valuations $v' > v$ finds it strictly optimal to also purchase at round $t$ with price $p_t$. This shows that if buyer $i$ at time $t$ does not purchase, then her valuation is larger than a certain threshold denoted by $v_t^{(i)}$. Finally, note that each buyer that purchases the item leaves the game, leading to $v_t^{(i)} \ge v_{t-1}^{(i)}$, for $2 \le t\le T $. 
\subsection*{Proof of Lemma \ref{Lem:buyer'sbehavior}}\label{App:Proof:Lem:buyer'sbehavior}
Using Proposition \ref{Pro:SingleCrrossing}, for a given price sequence $p_T, \dots, p_1$ and history $H^t$, the decision of buyer $i \in [n]$ at time $t$ is to buy if and only if her valuation exceeds $v^{(i)}_t$. 
 Buyer $i$ with valuation $v^{(i)}_t$ must be indifferent between accepting price $p_t$ and waiting until the subsequent round (in a continuation game with $t-1$ periods to go). If buyer $i$ with valuation $v^{(i)}_t$ purchases at price $p_t$, her utility (conditional on period $t$ having been reached) is
\begin{align}\label{eq:periodtpayoff2}
v^{(i)}_t - p_t + \sum_{j} g_{ij} \mathbf{1} \{j \text{ has bought at } s > t \} .
\end{align}
By waiting one more period instead, buyer $i$ with valuation $v^{(i)}_t$ obtains utility
\begin{align}\label{eq:periodtpayoff3}
v^{(i)}_t - p_{t-1} + \sum_{j} g_{ij} \mathbb{E} \left[ \mathbf{1} \{j \text{ has bought at } s > t-1 \} \right].
\end{align}
Buyer $i$ with threshold $v^{(i)}_t$ at round $t$ must be indifferent between buying at round $t$ and buying at the next round, i.e., round $t-1$. Subtracting Eq. \eqref{eq:periodtpayoff2} and Eq. \eqref{eq:periodtpayoff3} leads to 
\begin{align*}
v^{(i)}_t- p_t = v^{(i)}_{t}- p_{t-1} +  \sum_{j \in [n] \setminus H^t \setminus \{i\}} g_{ij} \mathbb{P}\left[ j \text{ buys at  round } t \right],
\end{align*}
where  
\begin{align*}
\mathbb{P}\left[ j \text{ buys at  round } t \right]= \mathbb{P}\left[v^{(j)} \ge v^{(j)}_{t} \mid v^{(j)} \le v^{(j)}_{t+1} \right], 
\end{align*}
Finally, note that if $v_{t+1}^{(j)}= 0$, then buyer $j$ has purchased the item in round $t+1$ and if $v_{t+1}^{(j)} \in (0,1]$, then using Bayes' rule we have 
\begin{align*}
\mathbb{P}\left[v^{(j)} \ge v^{(j)}_{t} \mid v^{(j)} \le v^{(j)}_{t+1} \right] = 1- \frac{F_v\left( v^{(j)}_t \right)}{F_v(v^{(j)}_{t+1})},
\end{align*}
which completes the proof.  
\subsection*{Proof of Lemma \ref{Lem:DeltaSmall}}\label{App:Proof:Lem:DeltaSmall}
\textbf{Proof of $(EA)^{-1} \mathbf{1} \ge \mathbf{0}$: } we will show that $E^{-1} \mathbf{1} \ge \mathbf{0}$. First note that since $\rho(C) \le ||C||_{\infty} \le \mathbf{1}^T C \mathbf{1}$, we have $\delta \rho(C) < 1$ which guarantees $I + \delta C$ is invertible. Since $\delta \rho(C) < 1$, we also have 
\begin{align*}
E^{-1}\mathbf{1} & =\left(I -\delta C + \delta^2 C^2 - \dots \right) \mathbf{1}= \left(I+ \delta^2 C^2 + \dots \right)\left(\mathbf{1}- \delta C \mathbf{1} \right) \ge \mathbf{1}- \delta C \mathbf{1} \ge \mathbf{0},
\end{align*}
where the last two inequalities follow from $\delta ||C||_{\infty} \le 1$. 
\\ \textbf{Proof of $\frac{\left(E A \right)^{-1} \mathbf{1}}{\mathbf{1}^T E^{-1} \mathbf{1}  } \le  \mathbf{1}$:}
using Taylor series expansion of $(I+\delta C)^{-1}$ (which converges because $\delta \rho(C)<1$), leads to 
\begin{align}\label{eq:ineqtemp1}
& \mathbf{1}^T \left(I + \delta C \right)^{-1} \mathbf{1} \ge \mathbf{1}^T  \left(I - \delta C \right) \mathbf{1} = m - \delta \left( \mathbf{1}^T C \mathbf{1} \right).
\end{align}
We also have 
\begin{align}\label{eq:ineqtemp2}
E^{-1} \mathbf{1} & = \left(I - \delta C + \delta^2 C - \delta^3 C^3 + \dots \right) \mathbf{1}  = \left(I - \delta C + \delta^2 C  \right) \mathbf{1} - \left( \delta^2 C^2 + \delta^4 C^4 + \dots \right) \left(I -\delta C \right)\mathbf{1} \nonumber \\
& \stackrel{(1)}{\le} \left(I - \delta C + \delta^2 C  \right) \mathbf{1}  \stackrel{(2)}{\le} \mathbf{1},
\end{align}
where inequality (1) follows from $\delta ||C||_{\infty} < 1$ and inequality (2) follows from $\delta C^2 \mathbf{1} \le C \mathbf{1}$. This inequality holds because for any $i \in [m]$ we have 
\begin{align*}
\delta \sum_{j=1}^m [C^2]_{ij} & = \delta \sum_{j=1}^m \sum_{k=1}^m C_{ij} C_{kj} = \delta \sum_{k=1}^m C_{ik} \sum_{j=1}^m C_{kj} \le \delta ||C||_{\infty} \sum_{k=1}^m C_{ik} \le  \sum_{k=1}^m C_{ik}.
\end{align*}
Putting Eqs. \eqref{eq:ineqtemp1} and \eqref{eq:ineqtemp2} together leads to 
\begin{align*}
\frac{\left(E A \right)^{-1} \mathbf{1}}{\mathbf{1}^T E^{-1} \mathbf{1} } & \le \frac{m E^{-1} \mathbf{1}}{m- \delta \mathbf{1}^T C \mathbf{1}} \le \frac{m \mathbf{1}}{m- \delta \mathbf{1}^T C \mathbf{1}} \le  \mathbf{1},
\end{align*}
where we used $\delta \mathbf{1}^T C \mathbf{1} \le m$ in the last inequality that follows from $\delta ||C||_{\infty} \le 1$. 
\\ \textbf{Proof of $x-\frac{1-F_v(x)}{f_v(x)}-\frac{F_v(x)}{\mathbf{1}^T E^{-1} \mathbf{1}}$ non-decreasing:} for uniform distribution this condition is equivalent to having $\mathbf{1}^T E^{-1} \mathbf{1} \ge \frac{1}{2}$. Finally, note that this holds because from \eqref{eq:ineqtemp1} we obtain $\mathbf{1}^T E^{-1} \mathbf{1} \ge m-\frac{1}{2} \ge \frac{1}{2}$. 

\subsection*{Proof of Theorem \ref{thm:NonUniformValuations}}\label{App:Proof:thm:NonUniformValuations}
Throughput this proof we use the following notation for probability mass function of a multinomial random variable:
\begin{align*}
\Phi(k_1, \dots, k_m; n; p_1, \dots, p_m)= \binom{n}{k_1, \dots, k_m} (1- p_1- \dots- p_m)^{n-k_1-\dots- k_m} \prod_{i=1}^{m-1} p_i^{k_i},
\end{align*}
for all $(k_1, \dots, k_m)$ and $(p_1, \dots, p_m)$ such that $\sum_{i=1}^m k_i \le n$ and $\sum_{i=1}^m p_i \le 1$. 

For a given price sequence $\mathbf{p}=(p_T, \dots, p_1)$ and a general distribution, using Corollary \ref{Cor:SymmetricEquil}, the critical thresholds defining Buyers' equilibrium satisfy
\begin{align}\label{eq:tempNonUniform1}
EA \left(F\left( \mathbf{v}_{t+1} \right) - F\left( \mathbf{v}_t \right) \right) & = (p_{t-1}- p_{t}) \mathbf{1}, \quad t=T, \dots, 2, \nonumber \\
\mathbf{v}_1 - p_1 \mathbf{1} + EA \left( \mathbf{1}- F\left( \mathbf{v}_2 \right) \right) & = 0,
\end{align}
with the convention that $\mathbf{v}_{T+1}=1$ where $F_v(\mathbf{v})= (F_v(v^{(1)}), \dots, F_v(v^{(m)}))$.  
We will next find the limiting normalized revenue for a given price sequence. 
The normalized revenue is 
\begin{align}\label{eq:BersteinSimplex22}
& \frac{1}{n} \sum_{t=T}^{1} p_t \mathbb{E}\left[ \sum_{i=1}^m \sum_{j \in G_i} \mathbf{1}\{ v^{(j)} \in  [v^{(i)}_{t}, v^{(i)}_{t+1}) \} \right] \nonumber \\
&=  \sum_{\substack{ \sum_{s=1}^T k^{(s)}_j \le n_j \\ j \in [m], s \in [T]}}   \prod_{j=1}^m \Phi\left(k^{(1)}_j, \dots, k^{(T)}_j; n_j; F_v\left(v_{1}^{(j)} \right)-F_v\left(v_{0}^{(j)} \right), \dots, F_v\left(v_{T}^{(j)}\right)-F_v\left(v_{T-1}^{(j)} \right) \right) \nonumber \\
& \frac{1}{n} \left( p_T \sum_{j=1}^m k^{(T)}_j  + p_{T-1} \sum_{j=1}^m k^{(T-1)}_j  + \dots + p_1 \sum_{j=1}^m k^{(1)}_j \right),
\end{align}
where by convention $v_0^{(j)}=0$ and the multinomial distribution captures the number of  possibilities for selecting a partition of $[n_j]$ into $T+1$ subsets of size $k^{(1)}_j, \dots, k^{(T)}_j$, and $n_j - k^{(1)}_j - \dots - k^{(T)}_j$. Note that for $s=1, \dots, T$, $k^{(s)}_j$ shows the number of buyers in block $j$ who buy at price $p_s$ and the remaining number of buyers in block $j$ (i.e., $n_j - k^{(1)}_j - \dots - k^{(T)}_j$ many) decide not to buy the product.
We rewrite Eq. \eqref{eq:BersteinSimplex22} and take the limit as $n \to \infty$, resulting in 
\begin{align*}
\lim_{n \to \infty} & \sum_{\substack{ \sum_{s=1}^T k^{(s)}_j \le n_j \\ j \in \{2, \dots, m-1\}, s \in [T]}}  \Phi\left(k^{(1)}_j, \dots, k^{(T)}_j; n_j; F_v\left(v_{1}^{(j)} \right)-F_v\left(v_{0}^{(j)} \right), \dots, F_v\left(v_{T}^{(j)}\right)-F_v\left(v_{T-1}^{(j)} \right) \right)   \\
& \times  \frac{1}{n} \left( p_T \sum_{j=2}^m k^{(T)}_j  + p_{T-1} \sum_{j=2}^m k^{(T-1)}_j  + \dots + p_1 \sum_{j=2}^m k^{(1)}_j \right)\\
& \times \sum_{\substack{ \sum_{s=1}^T k^{(s)}_1 \le n_1 \\  s \in [T]}}   \Phi\left(k^{(1)}_1, \dots, k^{(T)}_1; n_1; F_v\left(v_{1}^{(1)} \right)-F_v\left(v_{0}^{(1)} \right), \dots, F_v\left(v_{T}^{(1)}\right)-F_v\left(v_{T-1}^{(1)} \right) \right)  \\
& \times  \frac{\alpha_1}{n_1} \left( p_T  k^{(T)}_1  + p_{T-1}  k^{(T-1)}_1  + \dots + p_1  k^{(1)}_1 \right) \\
&= \lim_{n \to \infty} \sum_{\substack{ \sum_{s=1}^T k^{(s)}_j \le n_j \\ j \in \{2, \dots, m-1\}, s \in [T]}}  \prod_{j=2}^m  \Phi\left(k^{(1)}_j, \dots, k^{(T)}_j; n_j; F_v\left(v_{1}^{(j)} \right)-F_v\left(v_{0}^{(j)} \right), \dots, F_v\left(v_{T}^{(j)}\right)-F_v\left(v_{T-1}^{(j)} \right) \right)  \\
& \times \frac{1}{n} \left( p_T \sum_{j=2}^m k^{(T)}_j  + p_{T-1} \sum_{j=2}^m k^{(T-1)}_j  + \dots + p_1 \sum_{j=2}^m k^{(1)}_j \right)\\
& \times   \alpha_1 \left( p_T  \left(1-F_v\left(v^{(1)}_T\right)\right)  + p_{T-1}  \left( F_v\left(v^{(1)}_T\right)- F_v\left(v^{(1)}_{T-1} \right) \right)  + \dots + p_1  \left( F_v\left(v^{(1)}_2\right)- F_v\left(v^{(1)}_1 \right) \right) \right),
\end{align*}
where we used Theorem \ref{Thm:BersteinSimplex} for $n_1 \to \infty$. Again using Theorem \ref{Thm:BersteinSimplex}, $m-1$ times for $j=2, \dots, m$ as $n_j \to \infty$, the normalized expected revenue becomes 
\begin{align}\label{eq:tempNonUniform2}
& p_T \sum_{j=1}^m \alpha_j \left(1- F_v\left(v_{T}^{(j)}\right)\right)   + \dots + p_1 \sum_{j=1}^m \alpha_j \left(F_v\left(v_{2}^{(j)}\right)-F_v\left(v_{1}^{(j)}\right)\right) = \sum_{t=T}^1 p_t \mathbf{1}^T A \left( F_v\left(\mathbf{v}_{t+1}\right)- F_v\left(\mathbf{v}_t\right) \right),
\end{align}
where we used the fact that $\mathbf{0} \le \mathbf{v}_1 \le \mathbf{v}_2 \dots \le \mathbf{v}_T \le \mathbf{1}$ which guarantees $F_v\left(\mathbf{v}_{t+1}\right) - F_v \left( \mathbf{v}_{t} \right) \ge 0$ for all $t=1, \dots, T$. At the end of this proof, we will show that Assumption \ref{assump:generalnetworkNonUniformValuations} guarantees $\mathbf{0} \le \mathbf{v}_1 \le \mathbf{v}_2 \dots \le \mathbf{v}_T \le \mathbf{1}$. Taking summation of Eq. \eqref{eq:tempNonUniform1} for $t=2,\dots, T$ leads to 
\begin{align*}
EA\left(1- F_v\left(\mathbf{v}_2 \right) \right) = (p_1- p_T) \mathbf{1}.
\end{align*}
Using this equation in Eq. \eqref{eq:tempNonUniform1} for $t=1$ leads to $\mathbf{v}_1= p_T \mathbf{1}$. Therefore, the normalized revenue can be written as 
\begin{align}\label{eq:tempNonUniform3}
& \sum_{t=T}^1 p_t \mathbf{1}^T A \left( F_v\left(\mathbf{v}_{t+1}\right)- F_v\left(\mathbf{v}_t\right) \right) \nonumber \\
& \stackrel{(1)}{=} \left( \sum_{t=T}^2 p_t \left(p_{t-1}- p_t\right) \left(\mathbf{1}^T E^{-1} \mathbf{1} \right) \right)+ p_1 \mathbf{1}^T A \left(F_v\left( \mathbf{v}_{2} \right) - F_v\left( \mathbf{v}_1 \right) \right) \nonumber \\
& = \left( \sum_{t=T}^2 p_t \frac{p_{t-1}- p_t}{\left( \frac{1}{\mathbf{1}^T E^{-1} \mathbf{1}} \right)} \right)+ p_1 \mathbf{1}^T A \left(\mathbf{1}- \left( EA\right)^{-1} \mathbf{1} (p_1- p_T) - F_v(p_T) \mathbf{1} \right) \nonumber \\
& = \left( \sum_{t=T}^2 p_t \frac{p_{t-1}- p_t}{\left( \frac{1}{\mathbf{1}^T E^{-1} \mathbf{1}} \right)} \right)+ p_1 \frac{\left(\frac{1}{\mathbf{1}^T E^{-1} \mathbf{1}} \right)+p_{T}- \left( \frac{1}{\mathbf{1}^T E^{-1} \mathbf{1}} \right) F_v(p_T)-p_1}{ \left( \frac{1}{\mathbf{1}^T E^{-1} \mathbf{1}} \right)},
\end{align}
where (1) follows from Eq. \eqref{eq:tempNonUniform1}.
Therefore, the monopolist's problem is to choose $\mathbf{p}=(p_T, \dots, p_1)$ that maximizes
\begin{align*}
& h(\mathbf{p})= \left( \sum_{t=T}^2 p_t \frac{p_{t-1}- p_t}{\left( \frac{1}{\mathbf{1}^T E^{-1} \mathbf{1}} \right)} \right)+ p_1 \frac{\left(\frac{1}{\mathbf{1}^T E^{-1} \mathbf{1}} \right)+p_{T}- \left( \frac{1}{\mathbf{1}^T E^{-1} \mathbf{1}} \right) F_v(p_T)-p_1}{ \left( \frac{1}{\mathbf{1}^T E^{-1} \mathbf{1}} \right)}.
\end{align*}
The first order conditions, results in  
\begin{align*}
\frac{\partial h }{\partial p_T}&  = - 2p_{T} + p_{T-1} + p_1- \left( \frac{1}{\mathbf{1}^T E^{-1} \mathbf{1}} \right) p_1 f_v(p_T) = 0, \\
\frac{\partial h }{\partial p_t}  & = p_{t+1}- 2p_{t}+ p_{t-1}=0, \quad t= T-1, \dots, 2, \\
\frac{\partial h }{\partial p_1} & = p_2 + \left( \frac{1}{\mathbf{1}^T E^{-1} \mathbf{1}} \right)+ p_T- \left( \frac{1}{\mathbf{1}^T E^{-1} \mathbf{1}} \right) F_v(p_T) - 2 p_1 =0.
\end{align*}
We will first find the solution of this set of equations and then show that with Assumption \ref{assump:generalnetworkNonUniformValuations}, the solution of first order conditions maximizes the revenue. Starting from the last equation, we obtain 
\begin{align*}
p_2= 2p_1 - \left( \left( \frac{1}{\mathbf{1}^T E^{-1} \mathbf{1}} \right)+ p_T - \left( \frac{1}{\mathbf{1}^T E^{-1} \mathbf{1}} \right) F_v(p_T) \right).
\end{align*}
 Plugging this into the equation corresponding to $\frac{\partial h }{\partial p_2}=0 $, leads to 
\begin{align*} 
 p_3= 3 p_1 - 2\left( \left( \frac{1}{\mathbf{1}^T E^{-1} \mathbf{1}} \right)+ p_T - \left( \frac{1}{\mathbf{1}^T E^{-1} \mathbf{1}} \right) F_v(p_T) \right).
\end{align*} 
 Repeating this argument leads to 
\begin{align}\label{eq:SolvingRecursion}
p_t = t p_1 - (t-1) \left( \left( \frac{1}{\mathbf{1}^T E^{-1} \mathbf{1}} \right)+ p_T - \left( \frac{1}{\mathbf{1}^T E^{-1} \mathbf{1}} \right) F_v(p_T) \right), \quad t=1, \dots, T.
\end{align}
Using Eq. \eqref{eq:SolvingRecursion} for $t=T$ and the equation corresponding to $\frac{\partial h }{\partial p_T}=0 $ yields
\begin{align}\label{eq:Proof:p1pT1}
p_T= T p_1 - (T-1) \left( \left( \frac{1}{\mathbf{1}^T E^{-1} \mathbf{1}} \right)+ p_T - \left( \frac{1}{\mathbf{1}^T E^{-1} \mathbf{1}} \right) F_v(p_T) \right).
\end{align}
Moreover, using Eq. \eqref{eq:SolvingRecursion} for $t=T-1$ in the equation corresponding to $\frac{\partial h }{\partial p_T}=0 $, gives 
\begin{align}\label{eq:Proof:p1pT2}
& -2 p_T + \left( (T-1) p_1 - (T-2) \left( \left( \frac{1}{\mathbf{1}^T E^{-1} \mathbf{1}} \right)+ p_T - \left( \frac{1}{\mathbf{1}^T E^{-1} \mathbf{1}} \right) F_v(p_T)\right) \right) \nonumber \\
& + p_1- \left( \frac{1}{\mathbf{1}^T E^{-1} \mathbf{1}} \right) p_1 f_v(p_T) =0.
\end{align} 
Combining Eq. \eqref{eq:Proof:p1pT1} and Eq. \eqref{eq:Proof:p1pT2}, we find the price in the first and last periods as 
\begin{align}\label{eq:SolvingRecursion2}
p_T& = \left( 1-F_v(p_T) \right) \left(\frac{1}{f_v(p_T)} - \frac{T-1}{T \left( \mathbf{1}^T E^{-1} \mathbf{1} \right)} \right), \nonumber \\
p_1& = \frac{1-F_v(p_T)}{f_v(p_T)}=(T-1) \frac{1-F_v(p_T)}{T \left( \mathbf{1}^T E^{-1} \mathbf{1} \right)} + p_T.
\end{align}
Invoking Eq. \eqref{eq:SolvingRecursion2} in Eq. \eqref{eq:SolvingRecursion} results in the optimal price sequence
\begin{align}\label{eq:priceseqtemp}
p_t & = (T-t) \frac{1-F_v(p_T)}{T \left( \mathbf{1}^T E^{-1} \mathbf{1} \right)} + p_T, \quad t=1, \dots, T.
\end{align}
Therefore, the price sequence starts from $p_T$ in the first periods and linearly increases. Also, note that the price $p_T$ given by the solution of Eq. \eqref{eq:SolvingRecursion2} is unique. We show this by establishing that 
\begin{align*}
1- \frac{1-F(x)}{f(x)} + \frac{T-1}{T}\frac{1}{\mathbf{1}^T E^{-1} \mathbf{1}} (1- F(x))
\end{align*}
is increasing. Taking derivative of this equation leads to 
\begin{align*}
1- \left( \frac{1-F(x)}{f(x)}\right)' - \frac{T-1}{T} \frac{f(x)}{\mathbf{1}^T E^{-1} \mathbf{1}} \ge 1- \left( \frac{1-F(x)}{f(x)}\right)' - \frac{f(x)}{\mathbf{1}^T E^{-1} \mathbf{1}} \ge 0,
\end{align*}
where we used Assumption \ref{assump:generalnetworkNonUniformValuations} in the last inequality.

We next show that the first order condition gives the optimal price sequence. The Hessian of $h(\cdot)$ is given by a symmetric matrix $M$ where $M_{ii}=-2$ for $i=1, \dots, T-1$, $M_{TT}= -2 - \left(1/\mathbf{1}^T E^{-1} \mathbf{1} \right) p_1 f_v'(p_T)$, $M_{i, i+1}=1$, for $i=1, \dots, T-1$, and $M_{1T}= 1- \left(1/\mathbf{1}^T E^{-1} \mathbf{1} \right) f_v(p_T)$. We next show that the Hessian is negative semidefinite. For any $\mathbf{x}\in \mathbb{R}^T$, we  show that $\mathbf{x}^T M \mathbf{x} \ge 0$. We have 
\begin{align}\label{eq:NSD}
\mathbf{x}^T M \mathbf{x}&= -2\sum_{i=1}^{T-1} x_i^2 - (2+  \left(\frac{1}{\mathbf{1}^T E^{-1} \mathbf{1}} \right) p_1 f_v'(p_T)) x_T^2 + 2 \sum_{i=1}^{T-1} x_i x_{i+1} + 2  \left( 1- \left(\frac{1}{\mathbf{1}^T E^{-1} \mathbf{1}} \right) f_v(p_T) \right)x_1 x_{T} \nonumber \\
& \stackrel{(1)}{\le}  -x_1^2 - x_T^2\left( 1+ \frac{p_1 f'(p_T)}{\mathbf{1}^T E^{-1}\mathbf{1}}\right) + 2  \left( 1- \left(\frac{1}{\mathbf{1}^T E^{-1} \mathbf{1}} \right) f_v(p_T) \right)x_1 x_{T}   \nonumber \\
& \stackrel{(2)}{=} -x_1^2 - x_T^2\left( 1+ \frac{\left( 1- F(p_T) \right) f'(p_T)}{f(p_T) \mathbf{1}^T E^{-1}\mathbf{1}}\right) + 2  \left( 1- \left(\frac{1}{\mathbf{1}^T E^{-1} \mathbf{1}} \right) f_v(p_T) \right)x_1 x_{T},
\end{align}
where (1) follows from $a^2+ b^2 \ge 2 ab$ and (2) follows from Eq. \eqref{eq:SolvingRecursion2}. In order to show Eq. \eqref{eq:NSD} is non-positive it suffices to show that the matrix 
\begin{align*}
N=\begin{pmatrix}
-1  & 1- \frac{f(p_T)}{\mathbf{1}^T E^{-1} \mathbf{1}} \\
1- \frac{f(p_T)}{\mathbf{1}^T E^{-1} \mathbf{1}} & -\left( 1+ \frac{\left( 1- F(p_T) \right) f'(p_T)}{f(p_T) \mathbf{1}^T E^{-1}\mathbf{1}}\right)
\end{pmatrix}
\end{align*}
is negative semidefinite. Since $N_{11} < 0$, in order to show $N$ is negative semidefinite, it suffices to show the determinant is non-negative. The determinant is 
\begin{align*}
\text{det}(N)= f(p_T) \left( 1- \frac{f(p_T)}{\mathbf{1}^T E^{-1} \mathbf{1}} - \left(\frac{1-F(p_T)}{f(p_T)}\right)' \right) \ge 0,
\end{align*}
where we used Assumption \ref{assump:generalnetworkNonUniformValuations} to obtain the inequality.

To complete the proof, it remains to verify $\mathbf{0} \le \mathbf{v}_1 \le \mathbf{v}_2 \dots \le \mathbf{v}_T \le \mathbf{1}$. This holds because using Eq. \eqref{eq:tempNonUniform1}, we obtain
\begin{align}\label{eq:interior1}
F\left( \mathbf{v}_{t+1} \right) - F\left( \mathbf{v}_t \right)  = A^{-1}E^{-1}\mathbf{1} (p_{t-1}- p_{t})\ge \mathbf{0} , \quad t=T, \dots, 2,
\end{align}
where the inequality holds because $p_{t-1} \ge p_t$ and from Assumption \ref{assump:generalnetworkNonUniformValuations}, we have $(EA)^{-1} \mathbf{1} \ge \mathbf{0}$. Therefore, we have $ \mathbf{v}_2 \le \mathbf{v}_3 \dots \le \mathbf{v}_T \le \mathbf{1}$. We will next show that $\mathbf{v}_2 \ge \mathbf{v}_1 \ge 0$. 
Inequality $\mathbf{v}_1 \ge 0$ evidently holds as we have $\mathbf{v}_1= p_T \mathbf{1} \ge 0$. Also, $\mathbf{v}_2 \ge \mathbf{v}_1$ is equivalent to 
\begin{align}\label{eq:interior2}
F_v\left( \mathbf{v}_1 \right)= F_v\left(p_T \right) \mathbf{1}  \le \mathbf{1}- \left( EA\right)^{-1} \mathbf{1} \left( p_1- p_T \right) = F_v \left( \mathbf{v}_2\right).
 \end{align} 
This holds because we have 
\begin{align*}
\left( EA\right)^{-1} \mathbf{1} \left( p_1- p_T \right) & \stackrel{(1)}{=} \left( EA\right)^{-1} \mathbf{1} \left( \frac{1-F_v(p_T)}{f_v(p_T)}- p_T \right)  \stackrel{(2)}{=} (1- F_v(p_T)) \left( EA\right)^{-1} \mathbf{1}\frac{T-1}{T\left(\mathbf{1}^T E^{-1} \mathbf{1}\right)}   \\
& \stackrel{(3)}{\le}  (1- F_v(p_T)) \mathbf{1} ,
\end{align*}
where (1) follows from $p_1= \frac{1-F_v(p_T)}{f_v(p_T)}$, (2) follows from $p_T= \left( 1-F_v(p_T) \right) \left(\frac{1}{f_v(p_T)} - \frac{T-1}{T \left( \mathbf{1}^T E^{-1} \mathbf{1} \right)} \right)$, and (3) follows from  Assumption \ref{assump:generalnetworkNonUniformValuations} and in particular $\left( EA\right)^{-1} \mathbf{1}\frac{1}{\mathbf{1}^T E^{-1} \mathbf{1}} \le \mathbf{1} \le \frac{T}{T-1}\mathbf{1}$. 
We next find the corresponding optimal normalized revenue. Plugging price sequence Eq. \eqref{eq:priceseqtemp} in Eq. \eqref{eq:tempNonUniform3} leads to the optimal normalized revenue
\begin{align*}
& \sum_{t=T}^2 \left((T-t) \frac{1-F_v(p_T)}{T \left( \mathbf{1}^T E^{-1} \mathbf{1} \right)} + p_T  \right) \frac{1-F_v(p_T)}{T \left( \mathbf{1}^T E^{-1} \mathbf{1} \right)} \left(\mathbf{1}^T E^{-1} \mathbf{1} \right) \\
& + \left( (T-1) \frac{1-F_v(p_T)}{T \left( \mathbf{1}^T E^{-1} \mathbf{1} \right)} + p_T \right) \frac{\left(\frac{1}{\mathbf{1}^T E^{-1} \mathbf{1}} \right)+p_{T}- \left( \frac{1}{\mathbf{1}^T E^{-1} \mathbf{1}} \right) F_v(p_T)-\left( (T-1) \frac{1-F_v(p_T)}{T \left( \mathbf{1}^T E^{-1} \mathbf{1} \right)} + p_T \right)}{ \left( \frac{1}{\mathbf{1}^T E^{-1} \mathbf{1}} \right)} \\
& =  \left(1- F_v(p_T) \right) \left(\frac{T-1}{2 T} \frac{1}{\mathbf{1}^T E^{-1} \mathbf{1}} \left(1- F_v(p_T) \right) + p_T \right). 
\end{align*}

\subsection*{Proof of Proposition \ref{Lem:PsiIncreasing}}
Using Theorem \ref{thm:NonUniformValuations}, the optimal normalized revenue is 
\begin{align}\label{eq:NonUniformValuationsRevenuePf}
p_T \left(1- F_v(p_T) \right)+ \frac{T-1}{2 T} \frac{1}{\mathbf{1}^T E^{-1} \mathbf{1}} \left(1- F_v(p_T) \right)^2,
\end{align}
where 
\begin{align}\label{eq:NonUniformValuationsRevenuePf2}
p_T= \left( 1-F_v(p_T) \right) \left(\frac{1}{f_v(p_T)} - \frac{T-1}{T } \frac{1}{\mathbf{1}^T E^{-1} \mathbf{1} } \right).
\end{align}
We next take the derivative of Eq. \ref{eq:NonUniformValuationsRevenuePf} with respect to $y= \frac{T-1}{T} \frac{1}{\mathbf{1}^T E^{-1} \mathbf{1}}$ and show that it is positive, implying that the optimal normalized revenue is increasing in both $\Psi(E)=\frac{1}{\mathbf{1}^T E^{-1} \mathbf{1}}$ and $T$. Taking derivative of Eq. \ref{eq:NonUniformValuationsRevenuePf} with respect to $y$ leads to 
\begin{align}\label{Eq:ProofTempDeriv1}
& \frac{d }{d y}\left( p_T \left(1- F_v(p_T) \right)+ \frac{y}{2}  \left(1- F_v(p_T) \right)^2 \right) \nonumber \\
& = \frac{d p_T }{d y} \left(1- F(p_T) \right) \left(-2 + y f(p_T)- \frac{(1-F(p_T)) f'(p_T)}{f^2(p_T)} \right) - \frac{1}{2} (1-F(p_T))^2.
\end{align}
Also, taking derivative of Eq. \eqref{eq:NonUniformValuationsRevenuePf2} with respect to $y$ yields
\begin{align} \label{Eq:ProofTempDeriv2}
\frac{d p_T }{d y}= -\frac{(1-F(p_T))}{2- f(p_T) y + \frac{1-F(p_T)f'(p_T)}{f^2(p_T)}}.
\end{align}
Plugging Eq. \eqref{Eq:ProofTempDeriv2} into Eq. \eqref{Eq:ProofTempDeriv1} gives 
\begin{align*}
& \frac{d }{d y}\left( p_T \left(1- F_v(p_T) \right)+ \frac{y}{2}  \left(1- F_v(p_T) \right)^2 \right) 
 = \frac{1}{2} \left( 1- F(p_T) \right)^2 \ge 0.
\end{align*}
Therefore, the optimal normalized revenue in increasing in $y$. Since $y= \frac{T}{T-1} \Psi(E)$ is increasing in both $T$ and $\Psi(E)$, the optimal normalized revenue is increasing in both $T$ and $\Psi(E)$.

The second derivative of the optimal normalized revenue with respect to $y$ is 
\begin{align*}
\frac{d}{d y} \frac{1}{2} \left( 1- F(p_T) \right)^2 = -(1- F(p_T)) f(p_T) \frac{d p_T}{d y}  
 = f(p_T) \frac{(1- F(p_T))^2}{2- f(p_T) y + \frac{(1-F(p_T)) f'(p_T)}{f^2(p_T)}}\ge 0,
\end{align*}
where we used Assumption \ref{assump:generalnetworkNonUniformValuations} in the inequality. Since $y$ is linear in $\Psi(E)$ (i.e., $y= \frac{T-1}{T} \Psi(E)$), the optimal normalized revenue is convex in $\Psi(E)$. 

The slope of the optimal price sequence is $(1- F(p_T)) \frac{\Psi(E)}{T}$. 
We next take the derivative of the slope with respect to $\Psi(E)$ and show it is non-negative. We have 
\begin{align*}
\frac{d}{d \Psi(E)} \left( (1- F(p_T)) \frac{\Psi(E)}{T} \right) & = \frac{1- F(p_T)}{T} - \frac{\Psi(E)}{T} f(p_T) \frac{d p_T}{d y} \frac{d y}{d \Psi(E)} \\
& = \frac{1- F(p_T)}{T} - \frac{\Psi(E)}{T} f(p_T) \frac{-(1- F(p_T))^2}{2- f(p_T) y + \frac{(1-F(p_T)) f'(p_T)}{f^2(p_T)}} \frac{T}{T-1} \\
& = \frac{1- F(p_T)}{T} + \frac{\Psi(E)}{T} f(p_T) \frac{(1- F(p_T))^2}{2- f(p_T) y + \frac{(1-F(p_T)) f'(p_T)}{f^2(p_T)}} \frac{T}{T-1} \ge 0,
\end{align*}
where we used Assumption \ref{assump:generalnetworkNonUniformValuations} in the inequality. 
Therefore, the slope of the optimal price sequence is increasing in $\Psi(E)$. 
\subsection*{Proof of Proposition \ref{Lem:Majorization}}
First note that given the in-degree sequence, we have $\mathbf{1}^T C \mathbf{1} = \mathbf{1}^T \tilde{C} \mathbf{1}= \sum_{i=1}^m d_i^{\text{(in)}}$. We next compare the terms $\mathbf{1}^T C^2 \mathbf{1}$ and $\mathbf{1}^T \tilde{C}^2 \mathbf{1}$. 
Since the imbalance sequence of $C$ majorizes the imbalance sequence of $\tilde{C}$, we have 
\begin{align}\label{eq:majorTemp1}
\sum_{i=1}^j  \tilde{c}_i^{\text{(out)}} \ge \sum_{i=1}^j  c_i^{\text{(out)}}, \quad \forall j=1, \dots, m.
\end{align}
Using Eq. \eqref{eq:majorTemp1} for $j=m$, yields
\begin{align*}
d_m^{\text{(in)}} \sum_{i=1}^m  (c_i^{\text{(out)}}- \tilde{c}_i^{\text{(out)}})  \le 0.
\end{align*}
Again, using Eq. \eqref{eq:majorTemp1} for $j=m-1$ and the previous inequality we obtain 
\begin{align*}
d_m^{\text{(in)}}(c_m^{\text{(out)}}- \tilde{c}_m^{\text{(out)}})+ d_{m-1}^{\text{(in)}} \sum_{i=1}^{m-1}  (c_i^{\text{(out)}}- \tilde{c}_i^{\text{(out)}}) \le d_m^{\text{(in)}} \sum_{i=1}^m  (c_i^{\text{(out)}}- \tilde{c}_i^{\text{(out)}}) \le 0,
\end{align*}
where we used $\sum_{i=1}^{m=1}(c_i^{\text{(out)}}- \tilde{c}_i^{\text{(out)}}) \le 0$ and $d_{m-1}^{\text{(in)}} \ge d_m^{\text{(in)}}$. Repeating this argument, leads to 
\begin{align*}
\sum_{i=1}^m  d_{i}^{\text{(in)}} (c_i^{\text{(out)}}- \tilde{c}_i^{\text{(out)}}) \le 0, 
\end{align*}
which completes the proof. 

\subsection*{Proof of Proposition \ref{thm:pricediscrimination}}\label{App:Proof:thm:pricediscrimination}
Similar to the proof of Theorem \ref{thm:NonUniformValuations} the critical thresholds defining the buyers' equilibrium satisfy
\begin{align}\label{eq:proof:thm:indiff2}
 EA (\mathbf{v}_{t+1} - \mathbf{v}_t) &= \mathbf{p}_{t-1} - \mathbf{p}_t, \quad t=T, \dots, 2, \nonumber \\
 \mathbf{v}_1- \mathbf{p}_1 + EA (\mathbf{1} - \mathbf{v}_2)&=0,
\end{align}
and the normalized revenue becomes 
\begin{align}\label{eq:proof:thm:indiff3}
& h(\mathbf{p}) \triangleq  \sum_{t=T}^{1} \mathbf{p}_t^T A \left(\mathbf{v}_{t+1} - \mathbf{v}_t \right) = \left( \sum_{t=T}^{2} \mathbf{p}_t^T E^{-1} \left(\mathbf{p}_{t-1} - \mathbf{p}_t \right) \right) + \mathbf{p}_1^T A \left(\mathbf{1}-\mathbf{p}_T \right) - \mathbf{p}_1^T E^{-1} \left(\mathbf{p}_1- \mathbf{p}_T \right).
\end{align}
Taking derivative of the revenue and multiplying by $E$ leads to 
\begin{align*}
E \frac{\partial h }{\partial p_T}&  =\mathbf{p}_{T-1} - 2 \mathbf{p}_{T} - EA \mathbf{p}_{1}+ \mathbf{p}_{1} = 0, \\
E \frac{\partial h }{\partial p_t}  & = \mathbf{p}_{t+1}- 2 \mathbf{p}_{t}+ \mathbf{p}_{t-1}=0, \quad t= T-1, \dots, 2, \\
E \frac{\partial h }{\partial p_1} & = \mathbf{p}_{2} + EA\mathbf{1} - EA \mathbf{p}_{T} - 2 \mathbf{p}_{1} + \mathbf{p}_{T} =0.
\end{align*}
Similar to the argument used in the proof of Theorem \ref{thm:NonUniformValuations}, the optimal price sequence becomes 
\begin{align}\label{eq:xtempsefr}
\mathbf{p}_t= (T-t) \mathbf{x} + \mathbf{p}_T, \quad t=T, \dots, 1.
\end{align}
Plugging this into the equation corresponding to $\frac{\partial h }{\partial p_1}=0$, we obtain 
\begin{align}\label{eq:xtemp}
\mathbf{x}= \frac{1}{T} EA \left(\mathbf{1} - \mathbf{p}_T \right).
\end{align}
Finally, using \eqref{eq:xtemp} in the equation corresponding to $\frac{\partial h }{\partial p_T}=0$, leads to $\mathbf{p}_T$ given by 
\begin{align*}
EA \mathbf{1} = \left(I + \left(I - \frac{T-1}{T} EA \right)^{-1} \right) EA \mathbf{p}_T.
\end{align*}
This equation simplifies to
\begin{align}\label{eq:pTtemp}
\mathbf{p}_T= \left( I + \left(I - \frac{T-1}{T} EA \right)^{-1} \right)^{-1} \mathbf{1}.
\end{align}
Plugging Eq. \eqref{eq:pTtemp} in Eq. \eqref{eq:xtemp} leads to 
\begin{align*}
\mathbf{x}=\frac{1}{T} EA \left(\mathbf{1} - \mathbf{p}_T \right) = \frac{1}{T-1} \left( \frac{2T}{T-1} \left( EA \right)^{-1} - I \right)^{-1},
\end{align*}
which gives the optimal price sequence given in the statement of Theorem \ref{thm:pricediscrimination}. 
We next show that the first order condition provides the optimal solution. To this end, we show that the Hessian is negative semidefinite. The Hessian is given by 
\begin{align*}
\begin{pmatrix}
    -2I & I & 0 & \dots  & I-EA \\
    I & -2I & I & \dots  & 0 \\
    \vdots & \vdots & \vdots & \ddots & \vdots \\
    0 & \dots & I  & -2I & I \\
    I-EA & 0 &  \dots & I &  -2I
\end{pmatrix}.
\end{align*}
 We need to show that for any $\mathbf{x}_1, \dots, \mathbf{x}_T \in \mathbb{R}^m$ we have 
\begin{align*}
-2 \sum_{i=1}^m \mathbf{x}_i^T I \mathbf{x}_i + \sum_{i\neq j} \mathbf{x}_i^T I \mathbf{x}_j - 2 \mathbf{x}_1^T EA \mathbf{x}_T \le 0.
\end{align*}
This follows by taking summation of the following inequalities:
\begin{align*}
 \frac{1}{2}\mathbf{x}_i^T I \mathbf{x}_i^T + \frac{1}{2}\mathbf{x}_{i+1}^T I \mathbf{x}_{i+1}^T & \ge \mathbf{x}_i I \mathbf{x}_{i+1}, \quad i=1, \dots, T-1, \\
 \frac{1}{2}\mathbf{x}_T^T (I-EA) \mathbf{x}_T^T + \frac{1}{2}\mathbf{x}_T^T (I-EA) \mathbf{x}_T^T &\ge \mathbf{x}_1 (I-EA) \mathbf{x}_{T}\\
 \frac{1}{2}\mathbf{x}_T^T EA \mathbf{x}_T^T + \frac{1}{2}\mathbf{x}_T^T EA \mathbf{x}_T^T & \ge 0,
\end{align*}
where the  first and last set of inequalities evidently hold and the second inequality follows from the assumption that $I- EA$ is positive semidefinite.
 Finally, note that since $I-EA$ is positive semidefinite, $I- EA\frac{T-1}{T}$ is invertible. 
 
 We will next show that the critical thresholds are interior. By Assumption we have $\left(I + \left(I- \frac{T-1}{T} EA \right)^{-1} \right)^{-1} \mathbf{1} \le \mathbf{1}$, showing  $\mathbf{p}_{t-1} \ge \mathbf{p}_t$ for $t=T, \dots, 2$. Therefore, using Eq. \eqref{eq:proof:thm:indiff2}, we obtain $\mathbf{1} \ge \mathbf{v}_T \ge \dots \ge \mathbf{v}_2$. We will next show $\mathbf{v}_2 \ge \mathbf{v}_1 \ge 0$. Since $\mathbf{v}_1 = \mathbf{p}_T$ and $\left(I + \left(I- \frac{T-1}{T} EA \right)^{-1} \right)^{-1} \mathbf{1} \ge \mathbf{0}$, we obtain $\mathbf{v}_1 \ge \mathbf{0}$. We next show that $\mathbf{v}_2 \ge \mathbf{v}_1$. Using Eq. \eqref{eq:proof:thm:indiff2}, we have 
 \begin{align*}
 \mathbf{v}_1 = \mathbf{p}_T \le \mathbf{1}- (EA)^{-1} \frac{T-1}{T} (EA) (\mathbf{1}- \mathbf{p}_T) \le \mathbf{1}- (EA)^{-1}  (\mathbf{p}_1- \mathbf{p}_T)= \mathbf{v}_2,
 \end{align*}
where the first inequality evidently holds and the second inequality follows from Eqs. \eqref{eq:xtempsefr} and \eqref{eq:xtemp}. Finally, using the Sherman-Morrison-Woodbury formula \citet[Section 0.7.4]{horn2012matrix} stated at the end of this proof, we can simplify Eq. \ref{eq:pTtemp} as follows
\begin{align*}
\left( I + \left(I - \frac{T-1}{T} EA \right)^{-1} \right)^{-1} \mathbf{1} = \frac{1}{2} \mathbf{1} - \frac{1}{2} \frac{T-1}{T} E \left(I - \frac{1}{2}\frac{T-1}{T} EA \right)^{-1} \frac{A \mathbf{1}}{2}.
\end{align*}
\begin{lemma}[Sherman-Morrison-Woodbury]
Suppose that a nonsingular matrix $A \in \mathbb{R}^{n \times n}$ has a known inverse $A^{-1}$ and consider $B=A+XR^{-1}Y$,in which $X$ is $n$-by-$r$, $Y$ is $r$-by-$n$, and $R$ is $r$-by-$r$ and nonsingular. If $B$ is nonsingular, then
\begin{align*}
B^{-1} = A^{-1} - A^{-1} X(R + Y A^{-1} X)^{-1}Y A^{-1}.
\end{align*}

\end{lemma} 

\subsection{Model with Utility from all Purchases}\label{sec:future}
In this subsection, we consider a variation of our model in which the utility of a buyer depends on the entire set of buyers who buy the product over $T$ periods. More specifically, the utility of buyer $i$ is
\begin{align*}
v^{(i)} - p_t +  \sum_{j \neq i} g_{ij} \mathbf{1}\{ j \text{ buys at } s \in \{1, \dots, T\} \},
\end{align*}
where $v^{(i)}$ is the valuation of buyer $i$ and $p_t$ is the posted price at round $t$. 
 We assume that customers are individually rational, i.e., if a customer purchases at round $t$, her utility considering only the purchases already happened, should be non-negative. This assumption is crucial as without this assumption given any price sequence all customers purchase the item at the round with minimum price. We next show via an example that in this setting the optimal price sequence can either be increasing or decreasing.
\begin{example}\label{example:NewModelTwoBuyersTwoPeriods}
\textup{
Let the valuations be uniform over $[0,1]$, $n=2$, and $g_{12}=g_{21}=g \le 1$. We first find the optimal non-decreasing price sequence and then find the optimal non-increasing price sequence. 
\begin{itemize}
\item Non-decreasing price sequence (i.e. $p_2 \le p_1$): the optimal prices are $p_2=p_1=1/2$ with revenue $\frac{1+g}{2}$. 
The seller's problem becomes 
\begin{align*}
\max_{p_2 \le p_1} ~ 2(1-p_2)^2 p_2 + 2 p_2 (1-p_2) \left(p_2+ p_1 \mathcal{P}_{[0,1]}\left( 1- \frac{p_1-g}{p_2}\right) \right),
\end{align*}
where the first term of the objective captures the case when both valuations are above $p_2$ which happens with probability $(1-p_2)^2$ with corresponding revenue $2p_2$. The second term captures the case when only one of the valuations is above $p_2$ whose probability is $2p_2(1-p_2)$. In this case, the monopolist obtains revenue $p_2$ from the buyer with valuation above $p_2$. The other buyer purchases if its valuation is above $p_1- g$ whose probability is $\mathcal{P}_{[0,1]}\left( 1- \frac{p_1-g}{p_2}\right)$ (i.e., $\mathbb{P}\left[ v\ge p_1-g \mid v \le p_2 \right]$) and its revenue is $p_1$. Also, note that if both valuations are below $p_2$, then none of the buyers purchases the product in the first period. Therefore, they generate no network effect and none of them will purchase in the second period with price $p_1$ which is higher than $p_2$. The solution to this optimization problem is $p_1=p_2=\frac{1}{2}$. 
\item Non-increasing price sequence (i.e., $p_2 \ge p_1$): note that in this case customers may postpone the purchase even if their utility is non-negative in order to pay the lower price offered in the subsequent period. If a buyer with valuation $v \ge p_2$ buys in the first period her expected utility becomes 
\begin{align}\label{eq:Extemptemp1}
v- p_2 + g (1- (p_1-g)),
\end{align}
where the term $(1- (p_1-g))$ is the probability of the other customer purchasing in one of the periods. On the other hand, if a buyer with valuation $v \ge p_2$ buys in the second period her expected utility becomes 
\begin{align}\label{eq:Extemptemp2}
v- p_1 + g (1- p_1),
\end{align}
where the term $(1- p_1)$ is the probability of the other customer purchasing in one of the periods. Comparing Eqs. \eqref{eq:Extemptemp1} and \eqref{eq:Extemptemp2}, a buyer with valuation $v \ge p_2$ purchases in the first period if and only if 
\begin{align*}
v- p_2 + g (1-p_1+g) \ge v-p_1+g(1-p_1),
\end{align*}
which leads to $g^2 \ge p_2-p_1$. If the prices does not satisfy this inequality, then both buyers purchase in the second period with price $p_1$ (i.e., lower price) and the optimal expected revenue becomes $2(1-p_1)p_1 \le \frac{1}{2}$. We next consider a price sequence that satisfies $g^2 \ge p_2-p_1$ and show the optimal expected revenue is always higher than $\frac{1}{2}$. In this case, the seller's problem becomes
\begin{align*}
\max_{p_2 \ge p_1} ~~ & 2 (1-p_2)^2 p_2 + 2p_2 (1-p_2) \left( \mathcal{P}_{[0,1]}\left( 1- \frac{p_1-g}{p_2}\right) p_1 + p_2 \right) + p_2^2 \left(2 p_1 \left( 1- \frac{p_1}{p_2} \right) \right) \\
\text{ s.t. } ~~ & g^2 \ge p_2-p_1,
\end{align*} 
where the first term of the objective captures the case when both valuations are above $p_2$ which happens with probability $(1-p_2)^2$ with corresponding revenue $2p_2$. The second term captures the case when only one of the valuations is above $p_2$ whose probability is $2p_2(1-p_2)$. In this case, the monopolist obtains revenue $p_2$ from the buyer with valuation above $p_2$. The other buyer purchases if its valuation is above $p_1- g$ whose probability is $\mathcal{P}_{[0,1]}\left( 1- \frac{p_1-g}{p_2}\right)$ (i.e., $\mathbb{P}\left[ v\ge p_1-g \mid v \le p_2 \right]$) and its revenue is $p_1$. The third terms of the objective captures the case when both valuations are below $p_2$. In this case, each player purchase in the second period with probability $1- \frac{p_1}{p_2}$ and obtains revenue $p_1$ from each purchase.   
The solution of this problem is:
\begin{enumerate}
\item If $g \ge \frac{1}{2}$, the optimal price sequence is $p_1=\frac{1}{2}$, $p_2=\frac{5}{8}$ with the optimal revenue $\frac{25}{32}$.
\item If $\frac{\sqrt{13}-1}{6} \le g < \frac{1}{2}$, the optimal price sequence is $p_1=g$, $p_2= \frac{1+g-g^2}{2}$, with the optimal revenue $\frac{\left( 1+g-g^2 \right)^2}{2}$. 
\item If $ \sqrt{2}-1 \le g < \frac{\sqrt{13}-1}{6}$, the optimal price sequence is $p_1=g$, $p_2= \frac{g(g+1)}{2}$, with the optimal revenue $2g\left( 1+g-2g^2-2g^3\right)$. 
\item If $g < \sqrt{2}-1$, , the optimal price sequence is $p_1 = \frac{1-g^2}{2}$, $p_2= \frac{1+g^2}{2}$, with the optimal revenue $\frac{1}{2} \left( 1+g+2g^2 - 2g^3 - 3g^4 + g^5\right)$. 
\end{enumerate}
\end{itemize}
Putting these two cases together the optimal revenue becomes the one plotted in \autoref{fig:NewModelTwoBuyersTwoPeriods}. 
}
\end{example}
\begin{figure}[t]
    \centering
        \includegraphics[width=4 in]{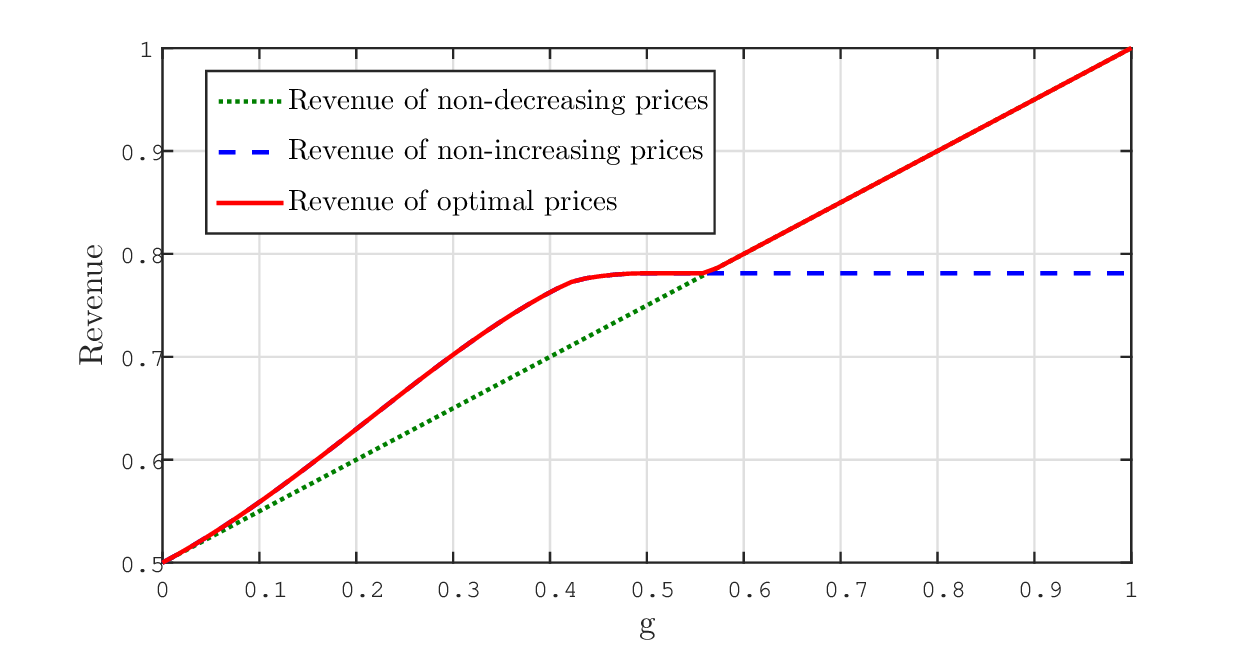}
        \caption{In a model with utility from all sales, the optimal price sequence can either be increasing or decreasing. This figure illustrates the optimal revenue for both increasing and decreasing price sequences for the setting described in Example \ref{example:NewModelTwoBuyersTwoPeriods}.}
      \label{fig:NewModelTwoBuyersTwoPeriods}
\end{figure}
In Example \ref{example:NewModelTwoBuyersTwoPeriods}, the optimal price sequence can be decreasing because a buyer who observes a lower price in the second period might be willing to purchase in the first period (with higher price) in order to incentivize the other buyer to purchase in the second period. This effect goes away once we consider a large number of buyers for which a similar argument to that of Proposition \ref{Pro:SingleCrrossing} shows that the optimal price sequence is non-decreasing. In the next proposition we characterize the optimal price sequence (which is non-decreasing). 
\begin{proposition}\label{Thm:Future}
Suppose the valuations are uniform and $||EA||_{\infty} < 1$. 
The optimal price sequence in the limit (as $n \to \infty$) is $p_t=\frac{1}{2}$, $t=T, \dots, 1$ with the optimal normalized revenue 
\begin{align*}
\frac{1}{4} \mathbf{1}^T A \left(I+ (EA)+ (EA)^2+ \dots + (EA)^{T-1} \right) \mathbf{1}. 
\end{align*}
\end{proposition}

Note that similar to Lemma \ref{Lem:DeltaSmall} for a weakly-tied block model $E=I+ \delta C$ and small enough $\delta$, the Assumption of Proposition \ref{Thm:Future} holds, i.e., $||EA||_{\infty} < 1$. 

Proposition \ref{Thm:Future} implies the following:
\begin{itemize}
\item The optimal revenue increases as the entries of the weighted network effects $EA$ increases. 
\item  Since $\rho(EA) \le ||EA||_{\infty}<1$, the limiting revenue as $T \to \infty$ becomes (see \citet[Chapter 5]{horn2012matrix})
\begin{align*}
\frac{1}{4} \mathbf{1}^T A (I- EA)^{-1} \mathbf{1}. 
\end{align*}

\item For a weakly-tied block model $E= I + \delta C$ and small enough $\delta$, we have $\rho(EA)<1$ and the first order Taylor series of $\frac{1}{4} \mathbf{1}^T A (I- EA)^{-1} \mathbf{1}$ leads to 
\begin{align*}
\frac{1}{4} \sum_{i=1}^m \frac{\alpha_i}{1- \alpha_i} + \delta \frac{1}{4}\sum_{i,j =1}^m \frac{\alpha_i}{1- \alpha_i} \frac{\alpha_j}{1- \alpha_j} C_{ij}.
\end{align*}
This shows that revenue becomes higher as the weighted summation of network externalities $C_{ij}$ increases, where the weights are given by $\frac{\alpha_i}{1- \alpha_i} \frac{\alpha_j}{1- \alpha_j}$. 
This establishes that revenue is higher when we have higher externality among blocks with higher sizes.
\end{itemize}
\subsection*{Proof of Proposition \ref{Thm:Future}}\label{App:Proof:Thm:Future}
The critical thresholds defining the buyers' equilibrium satisfy
\begin{align}\label{Pro:Future:Eq1}
\mathbf{v}_{t}- p_t \mathbf{1}+ EA (\mathbf{1}- \mathbf{v}_{t+1}) =\mathbf{0}, \quad t=1, \dots, T,
\end{align}
with the convention that $\mathbf{v}_{T+1}=1$. Eq. \eqref{Pro:Future:Eq1} leads to 
\begin{align}\label{Pro:Future:Eq2}
\mathbf{v}_{T+1}- \mathbf{v}_T & = (1- p_T)\mathbf{1} \nonumber \\
\mathbf{v}_{T}- \mathbf{v}_{T-1} & = p_T \mathbf{1}- p_{T-1} \mathbf{1} + (EA) \mathbf{1} (1- p_T) \nonumber \\
\mathbf{v}_{T-1}- \mathbf{v}_{T-2} & = p_{T-1}\mathbf{1} - p_{T-2} \mathbf{1} + (EA) \mathbf{1} (p_T- p_{T-1}) +  (EA)^2 \mathbf{1} (1- p_T) \nonumber \\
\mathbf{v}_{T+1-t}- \mathbf{v}_{T-t} & = \sum_{s=0}^t \left( (EA)^{t-s} \mathbf{1} \right) (p_{T+1-s}- p_{T-s}), \quad t=3, \dots, T-1,
\end{align}
with the convention $p_{T+1}=1$. The normalized revenue can be written as 
\begin{align}\label{Pro:Future:Eq3}
\sum_{t=1}^T p_t \mathbf{\alpha}^T (\mathbf{v}_{t+1}- \mathbf{v}_t)
\end{align}
which we need to maximize over non-decreasing price sequences, i.e., $p_T \le \dots \le p_2 \le p_1$. We show that the optimal solution is $p_t=\frac{1}{2}$, $t=1, \dots, T$. We establish this by using the sufficient KKT conditions for optimality (\citet[Proposition 3.3.2]{bertsekas1999nonlinear}). Following the notations used in \citet{bertsekas1999nonlinear}, we let 
\begin{align*}
& \mathbf{p}=(p_T, \dots, p_1), \\
& f_v(\mathbf{p})= -\sum_{t=1}^T p_t \mathbf{\alpha}^T (\mathbf{v}_{t+1}- \mathbf{v}_t),\\
& g_j(\mathbf{p})= p_{j+1}-p_j, \quad j=1, \dots, T-1, \\
& L(\mathbf{p}, \mathbf{\mu})= f_v(\mathbf{p}) + \sum_{j=1}^{T-1} \mu_j g_j(\mathbf{p}).
\end{align*}
With this notation, the optimization problem can be rewritten as 
\begin{align*}
& \min_{\mathbf{p}}  f_v(\mathbf{p}) \\
& \text{ s.t. } \mathbf{g}(\mathbf{p}) \le 0.
\end{align*}
Letting $\mathbf{p}^*= \frac{1}{2} \mathbf{1}$ and 
\begin{align*}
\mu^*_j= \frac{1}{2} \sum_{s=1}^{T-j} ((\mathbf{\alpha}^T(EA)^{s-1} \mathbf{1})- ((\mathbf{\alpha}^T(EA)^{T-s} \mathbf{1})), \quad j=1, \dots, T-1,
\end{align*}
we have 
\begin{align}
& \nabla_{\mathbf{p}} L(\mathbf{p}^*, \mathbf{\mu}^*)=0 \label{eq:KKT1}\\
& g_j(\mathbf{p}^*) \le 0, \quad j=1, \dots, T-1 \label{eq:KKT2} \\
& \mu_j^* (p^*_{j+1}-p^*_j) =0, \quad j=1, \dots, T-1 \label{eq:KKT3} \\
& \mu_j^*> 0,  \forall j, \text{ s.t. }p^*_{j+1}-p^*_j =0, \quad j=1, \dots, T-1 \label{eq:KKT4} \\
& \mathbf{y}^T  \nabla_{\mathbf{p} \mathbf{p}} L(\mathbf{p}^*, \mathbf{\mu}^*) \mathbf{y} > 0, \forall \mathbf{y} \neq 0 \text{ s.t. } \nabla g_j(\mathbf{p})^T \mathbf{y} =0, \forall j, \text{ s.t. }p^*_{j+1}-p^*_j =0 \label{eq:KKT5}. 
\end{align}
In particular, Eq. \eqref{eq:KKT1} and Eq. \eqref{eq:KKT2} are straightforward to verify, Eq. \eqref{eq:KKT3} holds as all inequalities are active and Eq. \eqref{eq:KKT4} holds because we have 
\begin{align*}
\mu^{*}_j & = \left(1- \mathbf{\alpha}^T (EA)^j \mathbf{1} \right) \sum_{s=0}^{T-1-j} \mathbf{\alpha}^T (EA)^s \mathbf{1}  \ge (1- ||(EA)^j||_{\infty})\sum_{s=0}^{T-1-j} \mathbf{\alpha}^T (EA)^s \mathbf{1} \\
& \ge (1- ||(EA)||_{\infty}^{j})\sum_{s=0}^{T-1-j} \mathbf{\alpha}^T (EA)^s \mathbf{1} > 0,
\end{align*}
where we used $||EA||_{\infty} <1$ in the last inequality. 
Finally, Eq. \eqref{eq:KKT5} holds because all $\mathbf{y} \in \mathbb{R}^{T-1}$ that satisfy the conditions are of the form $y \mathbf{1}$ for some non-zero $y \in \mathbb{R}$ for which 
\begin{align*} 
\mathbf{y}^T  \nabla_{\mathbf{p} \mathbf{p}} L(\mathbf{p}^*, \mathbf{\mu}^*) \mathbf{y} = y^2 \sum_{i, j=1}^{T} \nabla_{\mathbf{p} \mathbf{p}} L(\mathbf{p}^*, \mathbf{\mu}^*) = 2y^2 \sum_{t=1}^{T} \mathbf{\alpha}^T (EA)^{t-1} \mathbf{1} > 0.
\end{align*}
Therefore, using \citet[Proposition 3.3.2]{bertsekas1999nonlinear}, $p_t= \frac{1}{2}$, $t=1, \dots, T$ is the optimal price sequence which results in revenue 
\begin{align*}
\frac{1}{4} \mathbf{1}^T A \left(I+ (EA)+ (EA)^2+ \dots + (EA)^{T-1} \right) \mathbf{1}.
\end{align*} 
Finally, note that with price sequence $p_t=\frac{1}{2}$ for $t=1,\dots, T$, Eq. \eqref{Pro:Future:Eq2} results in $\mathbf{1} \ge \mathbf{v}_T \ge \dots \ge \mathbf{v}_1 \ge \mathbf{0}$, showing that the critical thresholds are in $[0,1]$. 


\bibliographystyle{plainnat}
\bibliography{References}
\end{document}